\newcommand{\complClFont}[1]{\normalfont\textbf{#1}}         
\newcommand{\logicClFont}[1]{\mathcal{#1}}        
\newcommand{\problemFont}[1]{\mathrm{#1}}         
\DeclarePairedDelimiterX\set[1]\lbrace\rbrace{#1}
\newcommand{\splits}{\protect\ensuremath{\#\mathsf{splits}}}
\newcommand{\teamsize}
	{\protect\ensuremath{|T|}}
\newcommand{\tw}{\protect\ensuremath{\mathsf{tw}}}
\newcommand{\twstruc}{\protect\ensuremath{\tw(\calA)}}
\newcommand{\freevar}{\protect\ensuremath{\#\mathsf{free\text-variables}}}
\newcommand{\quantifier}{\protect\ensuremath{\#\forall}}
\newcommand{\strucsize}{\protect\ensuremath{{|\calA|}}}
\newcommand{\formula}{\protect\ensuremath{|\Phi|}}
\newcommand{\arity}
	{\protect\ensuremath{\mathsf{dep\text-arity}}}
\newcommand{\variable}
	{\protect\ensuremath{\#\mathsf{variables}}}
\newcommand{\depa}[2]{{\mathsf{dep}}({\mathbf{#1}};{\mathbf{#2}})}
\newcommand{\depas}[2]{{\mathsf{dep}}({#1};{#2})}
\newcommand{\para}{\protect\ensuremath{\complClFont{para}}} 
\newcommand{\FPT}{\protect\ensuremath{\complClFont{FPT}}} 
\newcommand{\NP}{\protect\ensuremath{\complClFont{NP}}} 
\newcommand{\PSPACE}{\protect\ensuremath{\complClFont{PSPACE}}} 
\newcommand{\Ptime}{\protect\ensuremath{\complClFont{P}}} 
\newcommand{\XP}{\protect\ensuremath{\complClFont{XP}}}
\newcommand{\WP}{\protect\ensuremath{\complClFont{W[P]}}} 
\newcommand{\XNP}{\protect\ensuremath{\complClFont{XNP}}}
\newcommand{\NEXP}{\protect\ensuremath{\complClFont{NEXP}}}
\newcommand{\hard}{\text{-h}}
\newcommand{\complete}{\text{-c}}
\newcommand{\calA}{\mathcal{A}}
\newcommand{\data}{\mathsf{dc}}
\newcommand{\expression}{\mathsf{ec}}
\newcommand{\combined}{\mathsf{cc}}
\newcommand{\paraMC}[2]{{#2}\text-#1}
\newcommand{\preduction}{\ensuremath{\leq^{\Ptime}_m}}
\newcommand{\fptreduction}{\ensuremath{\leq^{\FPT}}}
\newcommand{\dfn}{\mathrel{\mathop:}=}
\newcommand{\setdefination}[1]{\protect\ensuremath{\{\, #1 \,\} }} 
\newcommand{\FO}{\protect\ensuremath{\logicClFont{FO}}\xspace}
\newcommand{\PDL}{\protect\ensuremath{\logicClFont{PDL}}\xspace}
\newcommand{\PL}{\protect\ensuremath{\logicClFont{PL}}\xspace}
\newcommand{\D}{\protect\ensuremath{\logicClFont{D}}\xspace}
\newcommand{\ESO}{\protect\ensuremath{\logicClFont{ESO}}\xspace}
\newcommand{\MC}{\problemFont{MC}}
\newcommand{\SAT}{\problemFont{SAT}}
\newcommand{\problemdef}[3]{%
\begin{center}
\begin{tabular}{lp{9cm}}\toprule
\textsf{\bfseries Problem:}& #1 \\\midrule
\textsf{\bfseries Input:}& #2.\\
\textsf{\bfseries Question:}& #3?\\\bottomrule
\end{tabular}
\end{center}
}
\newcommand{\VAR}{\mathrm{VAR}}
\newcommand{\Fr}{\mathrm{Fr}}
\newcommand{\dom}{\mathrm{dom}}
\newcounter{aaa}
\tikzset{
  apply/.style args={#1 except on segments #2}{postaction={
      /utils/exec={
        \@for\mattempa:=#2\do{\csdef{aaa@\mattempa}{}}
        \setcounter{aaa}{0}
      },
      decorate,decoration={show path construction,
        moveto code={},
        lineto code={
          \stepcounter{aaa}
          \ifcsdef{aaa@\theaaa}{}{
            \path[#1] (\tikzinputsegmentfirst) -- (\tikzinputsegmentlast);
          }
        },
        curveto code={
          \stepcounter{aaa}
          \ifcsdef{aaa@\theaaa}{}{
            \path [#1] (\tikzinputsegmentfirst) .. controls
            (\tikzinputsegmentsupporta) and (\tikzinputsegmentsupportb)
            ..(\tikzinputsegmentlast);
          }
        },
        closepath code={
          \stepcounter{aaa}
          \ifcsdef{aaa@\theaaa}{}{
            \path [#1] (\tikzinputsegmentfirst) -- (\tikzinputsegmentlast);
          }
        },
      },
    },
  },
}
\definecolor{paraNEXP}{HTML}{0F2080}
\definecolor{paraPSPACE}{HTML}{A95AA1}
\definecolor{paraNP}{HTML}{F5793A}
\definecolor{FPT}{HTML}{85C0F9}
\renewcommand{\mathbf}[1]{\boldsymbol{#1}}
\title{A Parameterized View on the Complexity of Dependence Logic}
\author{Juha Kontinen}{University of Helsinki, Department of Mathematics and Statistics, Helsinki, Finland}{email}{https://orcid.org/0000-0003-0115-5154}{Funded by grants 308712 and 338259 of the Academy of Finland}
\author{Arne Meier}{Leibniz Universität Hannover, Institut für Theoretische Informatik, Hannover, Germany}{meier@thi.uni-hannover.de}{https://orcid.org/0000-0002-8061-5376}{Funded by the German Research Foundation (DFG), project ME4279/1-2}
\author{Yasir Mahmood}{Leibniz Universität Hannover, Institut für Theoretische Informatik, Hannover, Germany}{mahmood@thi.uni-hannover.de}{https://orcid.org/0000-0002-5651-5391}{Funded by the German Research Foundation (DFG), project ME4279/1-2}
\authorrunning{J.\ Kontinen, A.\ Meier, and Y.\ Mahmood} 
\keywords{Team Semantics, Dependence Logic, Parameterized Complexity, Model Checking} 
\begin{document}

\maketitle

\begin{abstract}
In this paper, we investigate the parameterized complexity of model checking for Dependence Logic which is a well studied logic in the area of Team Semantics. 
We start with a list of nine immediate parameterizations for this problem, namely: the number of disjunctions (i.e., splits)/(free) variables/universal quantifiers, formula-size, the tree-width of the Gaifman graph of the input structure, the size of the universe/team, and the arity of dependence atoms.
We present a comprehensive picture of the parameterized complexity of model checking and obtain a division of the problem into tractable and various intractable degrees.
Furthermore, we also consider the complexity of the most important  variants (data and expression complexity) of  the model checking problem by fixing parts of the input. 
\end{abstract}

\section{Introduction}


In this article, we explore the parameterized complexity of model checking for dependence logic ($\D$). 
We give a concise classification of this problem and its standard variants (expression and data complexity) with respect to several syntactic and structural parameters. 
Our results lay down a solid foundation for a systematic study of the parameterized complexity of team-based logics.

The introduction of Dependence Logic \cite{DBLP:books/daglib/0030191} in 2007 marks also the birth of the general semantic framework of team semantics that has enabled a systematic study of various notions of dependence and independence during the past decade. 
Team semantics differs from Tarski's semantics by interpreting formulas by sets of assignments instead of a single assignment as in first-order logic. 
Syntactically, dependence logic is an extension of first-order logic by new dependence atoms $\depa{x}{y}$ expressing that the values of variables $\mathbf x$ functionally determine  values of the variables $\mathbf y$ (in the team under consideration). 
Soon after the introduction of dependence logic many other interesting team-based logics and atoms were introduced such as \emph{inclusion}, \emph{exclusion}, and \emph{independence} atoms that are intimately connected to the corresponding inclusion, exclusion, and multivalued dependencies studied in database theory~\cite{gradel10,galliani12}. 
Furthermore, the area has expanded, e.g., to propositional, modal and probabilistic variants (see \cite{HannulaKVV18,Luck19,HannulaKBV20} and the references therein).

For the applications, it is important to understand the complexity theoretic aspects of dependence logic and its variants. 
In fact, during the past few years, these aspects have been addressed in several studies. 
For example, on the level of sentences dependence logic and independence logic are equivalent to existential second-order logic while inclusion logic corresponds to positive greatest fixed point logic and thereby captures $\Ptime$  over finite (ordered) structures \cite{gallhella13}. 
Furthermore, there are (non-parameterized) studies that restrict the syntax and try to pin the intractability of a problem to a particular (set of) connective(s).
For instance, Durand and Kontinen~\cite{durand11} characterize the data complexity of fragments of dependence logic with bounded arity of dependence atoms/number of universal quantifiers, and Grädel~\cite{Gradel13} characterizes the combined and the expression complexity of the model checking problem of dependence logic. 
These studies will be of great help in developing  our parameterized approach.

A formalism to enhance the understanding of the inherent intractability of computational problems is brought by the framework of parameterized complexity~\cite{DBLP:series/txcs/DowneyF13}. 
Initiated by the founding fathers Downey and Fellows, in this area within computational complexity theory one strives for more structure within the darkness of intractability. 
Essentially, one tries to identify so-called parameters of a considered problem $\Pi$ to find algorithms solving $\Pi$ with runtimes of the form $f(k)\cdot|x|^{O(1)}$ for inputs $x$, corresponding parameter values $k$, and a computable function $f$. 
These kind of runtimes are called \emph{$\FPT$-runtimes} (from fixed-parameter tractable; short $\FPT$) and tame the combinatoric explosion of the solution space to a function $f$ in the parameter. 
As a very basic example in this vein, we can consider the propositional satisfiability problem $\SAT$. 
An immediate parameter that pulls the problem into the class $\FPT$ is the number of variables, as one can solve  $\SAT$ in time $2^{k}\cdot|\varphi|$ if $k$ is the number of variables of a given propositional formula $\varphi$. 
Yet, this parameter is not very satisfactory as it neither is seen fixed nor slowly growing in its practical instances. 
However, there are several interesting other parameters under which $\SAT$ becomes fixed-parameter tractable, e.g., the so-called treewidth of the underlying graph representations of the considered formula~\cite{DBLP:series/faia/SamerS09}. 
This term was coined by Robertson and Seymour in 1984~\cite{DBLP:journals/jct/RobertsonS84} and established a profound position (currently DBLP lists 812 papers with treewidth in its title) also in the area of parameterized complexity in the last years~\cite{DBLP:conf/sofsem/Bodlaender05,DBLP:series/txcs/DowneyF13}. 

Coming back to fpt-runtimes, a runtime of a very different quality (yet still polynomial for fixed parameters) than $\FPT$ is summarized by the complexity class $\XP$: $|x|^{f(k)}$ for inputs $x$, corresponding parameter values $k$, and a computable function $f$. 
Furthermore, analogously as $\XP$ but on nondeterministic machines, the class $\XNP$ will be of interest in this paper.
Further up in the hierarchy, classes of the form $\para\mathcal{C}$ for a classical complexity class $\mathcal{C}\in\{\NP,\PSPACE,\NEXP\}$ play a role in this paper.
Such classes intuitively capture all problems that are in the complexity class $\mathcal C$ after fpt-time preprocessing. 
In Fig.~\ref{fig:cc-landscape} an overview of these classes and their relations are depicted (for further details see, e.g., the work of Elberfeld~et~al.~\cite{DBLP:journals/algorithmica/ElberfeldST15}).
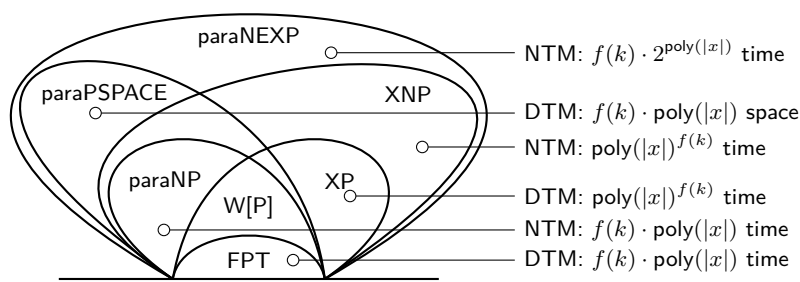
\begin{figure}[t]
	\centering
		\begin{tikzpicture}[every node/.style={font=\sffamily\footnotesize}]

		\path[use as bounding box] (-.7,0) rectangle (10,3.6);
	
		\draw[black,thick] (1.5,0) edge[out=150,in=30,looseness=12] (3.5,0);
		\node at (2.5,3.2) {paraNEXP};
		
		\draw[black,thick] (1.5,0) .. controls (-3.8,3.7) and (3,4) .. (3.5,0);
		\node at (.6,2.45) {paraPSPACE};

		\draw[black,thick] (3.5,0) .. controls (10.4,4.6) and (-3.1,2.9) .. (1.5,0);
		\node at (4.6,2.45) {XNP};

		\node at (2.5,.95) {W[P]};

		\draw[black,thick] (1.5,0) edge[out=145,in=100,looseness=4] (3.5,0);
		\node at (1.4,1.3) {paraNP};
		
		\draw[black,thick] (1.5,0) edge[out=80,in=35,looseness=4] (3.5,0);
		\node at (3.7,1.3) {XP};
	
		\draw[black,thick] (1.5,0) edge[out=80,in=100] (3.5,0);
		\node at (2.5,0.25) {FPT};
				
		\draw[black, thick, rounded corners] (0,0) -- (5,0);

\node[anchor=west] at (6,0.25)  (fptl){DTM: $f(k)\cdot \text{poly}(|x|)$ time};
\draw[o-]  (3,0.25)  -- (fptl) ;

\node[anchor=west] at (6,.65) (pnp){NTM: $f(k)\cdot \text{poly}(|x|)$ time};
\draw[o-]  (1.3,.65) -- (pnp) ;

\node[anchor=west] at (6,1.1)  (xp){DTM: $\text{poly}(|x|)^{f(k)}$ time};
\draw[o-]  (3.75,1.1)   -- (xp) ;

\node[anchor=west] at (6,2.2) (ppsp){DTM: $f(k)\cdot \text{poly}(|x|)$ space};
\draw[o-]  (0.4,2.2)   -- (ppsp) ;

\node[anchor=west] at (6,1.75) (xnp){NTM: $\text{poly}(|x|)^{f(k)}$ time};
\draw[o-]  (4.7,1.75)   -- (xnp) ;

\node[anchor=west] at (6,3) (nxp){NTM: $f(k)\cdot 2^{\text{poly}(|x|)}$ time};
\draw[o-]  (3.5,3)   -- (nxp) ;

	\end{tikzpicture}
	\caption{Landscape showing relations of relevant parameterized complexity classes with machine definitions.}\label{fig:cc-landscape}
\end{figure}

Recently, the propositional variant of dependence logic (\PDL) has been investigated regarding its parameterized complexity~\cite{DBLP:conf/foiks/MeierR18,DBLP:conf/foiks/0002M20}.
Moreover, propositional independence and inclusion logic have also been studied from the perspective of parameterized complexity~\cite{DBLP:journals/corr/abs-2105-14887}.
In this paper, we further pursue the parameterized journey through the world of team logics and will visit the problems of first-order dependence logic $\D$. 
As this paper is the first one that investigates $\D$ from the parameterized point of view, we need to gather the existing literature and revisit many results particularly from this  perspective. 
As a result, this paper can be seen as a systematic study with some  of the result following in a straightforward manner from the known non-parameterized results and some shedding light also on the non-parameterized view of model checking. 

We give an example below to illustrate how the concept of dependence arises as a natural phenomenon in the physical world.
\begin{example}
\begin{table}[t]
	\centering
	\begin{tabular}{cccccc}\toprule
\texttt{Flight} 	&  	\texttt{Destination} 	& 	\texttt{Gate} & \texttt{Date} & \texttt{Time} \\ \toprule

  FIN-70	& HEL -- FI	& C1 	& 04.10.2021		& 09:55  	\\
  SAS-475	& OSL -- NO	& C3 	& 04.10.2021		& 12:25 		 \\
  SAS-476	& HAJ -- DE	& C2 	& 04.10.2021		& 12:25 		 \\
  FIN-80	& HEL -- FI	& C1 	& 04.10.2021		& 19:55 	 	\\
  KLM-615	& ATL -- USA	& A5		& 05.10.2021 	& 11:55 		\\
  QR-70		& DOH -- QR	& B6 	& 05.10.2021 	& 12:25		\\
  THY-159	& IST -- TR	& A1 	& 05.10.2021		& 15:55 	 	\\
  FIN-80	& HEL -- FI	& C1 	& 05.10.2021		& 19:55 	 	\\
\bottomrule
	\end{tabular}
	\caption{An example flight departure screen at an airport}\label{database:example}
\end{table}	
	The database in Table~\ref{database:example} presents a  screen at an airport for showing details about departing flights.
	Alternatively, it can be seen as a team $T$ over attributes in the top row as variables.
	Clearly, $T\models \depas{\texttt{Flight,Date,Time}}{\texttt{Destination,Gate}}$, as well as $T\models \depas{\texttt{Gate,Date,Time}}{\texttt{Destination, Flight}}$. 
	Whereas, $T\not\models \depas{\texttt{Destination,Gate}}{\texttt{Time}}$ as witnessed by the pair (FIN-$70$, HEL -- FI, C$1$ , $04.10.2021$, $09:55$) and (FIN-$80$, HEL -- FI, C$1$ , $04.10.2021$, $19:55$).
\end{example}

\subparagraph*{Contribution.} Our classification is two-dimensional: 
\begin{enumerate}
	\item We consider the model checking problem of \D under various parameterizations: number of split-junctions in a formula $\splits$, the length of the formula $\formula$, number of free variables $\freevar$, the treewidth of the structure $\twstruc$, the size of the structure $\strucsize$, the size of the team $\teamsize$, the number of universal quantifiers in the formula $\quantifier$, the arity of the dependence atoms $\arity$, as well as the total number of variables $\variable$. 
	\item We distinguish between expression complexity $\expression$ (the input structure is fixed), data complexity $\data$ (the formula is fixed), and combined complexity $\combined$. 
\end{enumerate}
The results are summarized in Table~\ref{tbl:overview}. 
For instance, the parameters $\quantifier,\arity,$ and $\variable$ impact in lowering the complexity for $\expression$ (and not for $\combined$ or $\data$), while the parameter $\strucsize$ impacts for $\data$ but not for $\combined$ or $\expression$.

\subparagraph*{Related work.} The parameterized complexity analyses in the propositional setting~\cite{DBLP:conf/foiks/MeierR18,DBLP:conf/foiks/0002M20,DBLP:journals/corr/abs-2105-14887} have considered the combined complexity of model checking and satisfiability as problems of interest. 
On the $\combined$-level, the picture there is somewhat different, e.g., team size as a parameter for propositional dependence logic  enabled a $\FPT$ algorithm while in our setting it has no effect on the complexity  ($\para\NEXP$).
Grädel~\cite{Gradel13} studied the expression and the combined complexity for $\D$ in the classical setting, whereas the data complexity was considered by Kontinen~\cite{jarmoKontinen13}.

\subparagraph*{Organization of the paper.} 
In Section~\ref{sec:prelims}, we introduce the foundational concepts of dependence logic as well as parameterized complexity. 
In Section~\ref{sec:results} our results are presented while Section~\ref{sec:concl} concludes the article.

\section{Preliminaries}\label{sec:prelims}
We require standard notions from classical complexity theory \cite{DBLP:books/daglib/0072413}. 
We encounter the classical complexity classes $\Ptime, \NP, \PSPACE, \NEXP$ and their respective completeness notions, employing polynomial time many-one reductions ($\preduction $). 

\subparagraph*{Parameterized Complexity Theory.}
A \emph{parameterized problem} (PP) $P\subseteq\Sigma^*\times\mathbb N$ is a subset of the crossproduct of an alphabet and the natural numbers.
For an \emph{instance} $(x,k)\in\Sigma^*\times\mathbb N$, $k$ is called the (value of the) \emph{parameter}.
A \emph{parameterization} is a polynomial-time computable function that maps a value from $x\in\Sigma^*$ to its corresponding $k\in\mathbb N$.
The problem $P$ is said to be \emph{fixed-parameter tractable} (or in the class $\FPT$) if there exists a deterministic algorithm $\mathcal A$ and a computable function $f$ such that for all $(x,k)\in\Sigma^*\times \mathbb N$, algorithm $\mathcal A$ correctly decides the membership of $(x,k)\in P$ and runs in time $f(k)\cdot|x|^{O(1)}$.
The problem $P$ belongs to the class $\XP$ if $\mathcal A$ runs in time $|x|^{f(k)}$ on a deterministic machine, whereas $\XNP$ is the non-deterministic counterpart of $\XP$.
Abusing a little bit of notation, we write $\mathcal C$-machine for the type of machines that decide languages in the class $\mathcal C$, and we will say a function $f$ is ``$\mathcal C$-computable'' if it can be computed by a machine on which the resource bounds of the class $\mathcal C$ are imposed.

Also, we work with classes that can be defined via a precomputation on the parameter.
\begin{definition}
	Let $\mathcal C$ be any complexity class.
	Then $\para\mathcal C$ is the class of all PPs $P\subseteq\Sigma^*\times\mathbb N$ such that there exists a computable function $\pi\colon\mathbb N\to\Delta^*$ and a language $L\in\mathcal C$ with $L\subseteq\Sigma^*\times\Delta^*$ such that for all $(x,k)\in\Sigma^*\times\mathbb N$ we have that $(x,k)\in P \Leftrightarrow (x,\pi(k))\in L$.
\end{definition}
Notice that $\para\complClFont{P}=\FPT$.
The complexity classes $\mathcal{C}\in\{\NP,\PSPACE,\NEXP\}$ are used in the $\para\mathcal C$ context by us.

A problem $P$ is in the complexity class $\WP$, if it can be decided by a NTM running in time $f(k)\cdot|x|^{O(1)}$ steps, with at most $g(k)$-many non-deterministic steps, where $f,g$ are computable functions.
Moreover, $\WP$ is contained in the intersection of $\para\NP$ and $\XP$ (for details see the textbook of Flum and Grohe~\cite{DBLP:series/txtcs/FlumG06}). 

Let $c\in\mathbb N$ and $P\subseteq\Sigma^*\times\mathbb N$ be a PP, then the \emph{$c$-slice of $P$}, written as $P_c$ is defined as $P_c:=\{\,(x,k)\in\Sigma^*\times\mathbb N\mid k=c\,\}$.
Notice that $P_c$ is a classical problem then.
Observe that, regarding our studied complexity classes, showing membership of a PP $P$ in the complexity class $\para\mathcal C$, it suffices to show that for each slice $P_c\in\mathcal C$ is true.

\begin{definition}\label{def:fpt-reduction}
	Let $P\subseteq\Sigma^*\times\mathbb N,Q\subseteq\Gamma^*$ be two PPs.
	One says that $P$ is \emph{fpt-reducible} to $Q$, $P\fptreduction Q$, if there exists an $\FPT$-computable function $f\colon\Sigma^*\times\mathbb N\to\Gamma^*\times\mathbb N$ such that
	\begin{itemize}
		\item for all $(x,k)\in\Sigma^*\times\mathbb N$ we have that $(x,k)\in P\Leftrightarrow f(x,k)\in Q$,
		\item there exists a computable function $g\colon\mathbb N\to\mathbb N$ such that for all $(x,k)\in\Sigma^*\times\mathbb N$ and $f(x,k)=(x',k')$ we have that $k'\leq g(k)$.
	\end{itemize}
\end{definition}
Finally, in order to show that a problem $P$ is $\para\mathcal C$-hard (for some complexity class $\mathcal C$) it is enough to prove that for some $c\in \mathbb N$, the slice $P_c$ is $\mathcal C$-hard in the classical setting.
\subparagraph*{Dependence Logic.}
We assume basic familiarity with predicate logic~\cite{DBLP:books/daglib/0082516}. 
We consider first-order vocabularies $\tau$ that are sets of \emph{function} symbols and \emph{relation} symbols with an equality symbol $=$. 
Let $\VAR$ be a countably infinite set of \emph{first-order variables}.
Terms over $\tau$ are defined in the usual way, and the set of well-formed formulas of first order logic ($\FO$) is defined by the following BNF:
\[
	\psi \Coloneqq
	t_1 =t_2\mid 
	R(t_1,\dots,t_k)\mid
	\lnot R(t_1,\dots,t_k)\mid
	\psi\land\psi\mid
	\psi\lor\psi\mid
	\exists x\psi\mid
	\forall x\psi,
\]
where $t_i$ are terms $1\leq i\leq k$, $R$ is a $k$-ary relation symbol from $\sigma$, $k\in\mathbb N$, and $x\in\VAR$.
If $\psi$ is a formula, then we use $\VAR(\psi)$ for its set of variables, and $\Fr(\psi)$ for its set of free variables.
We evaluate $\FO$-formulas in $\tau$-structures, which are pairs of the form $\calA=(A,\tau^\calA)$, where $A$ is the \emph{domain} of $\calA$ (when clear from the context, we write $A$ instead of $\dom(\calA)$), and $\tau^\calA$ interprets the function and relational symbols in the usual way (e.g., $t^\calA\langle s\rangle=s(x)$ if $t=x\in\VAR$).
If $\mathbf t=(t_1,\dots,t_n)$ is a tuple of terms for $n\in\mathbb N$, then we write $\mathbf t^\calA\langle s\rangle$ for $(t_1^\calA\langle s\rangle, \dots, t_n^\calA\langle s\rangle)$.

Dependence logic ($\D$) extends $\FO$ by dependence atoms of the form $\depa{t}{u}$ where $\mathbf t$ and $\mathbf u$ are tuples of terms.
The semantics is defined through the concept of a team.
Let $\calA$ be a structure and $X\subseteq\VAR$, then an \emph{assignment} $s$ is a mapping $s\colon X\rightarrow A$. 
\begin{definition}
Let $X\subseteq\VAR$. A \emph{team $T$ in $\calA$ with domain $X$} is a set of assignments $s\colon X\to A$.	
\end{definition}
For a team $T$ with domain $X\supseteq Y$ define its \emph{restriction} to $Y$ as $T\upharpoonright Y\coloneqq\{\,s\upharpoonright Y \mid s\in T\,\}$.
If $s\colon X\to A$ is an assignment and $x\in\VAR$ is a variable, then $s^x_a\colon X\cup\{x\}\to A$ is the assignment that maps $x$ to $a$ and $y\in X\setminus\{x\}$ to $s(y)$. 
Let $T$ be a team in $\calA$ with domain $X$. 
Then we define $f\colon T\to \mathcal{P}(A)\setminus\{\emptyset\}$ as the \emph{supplementing function} of $T$.
This is used to extend or modify $T$ to the \emph{supplementing team} $T^x_f\coloneqq\{\,s^x_a\mid s\in T,a\in f(s)\,\}$. 
For the case $f(s)=A$ is the constant function we simply write $T^x_\calA$ for $T^x_f$. 
The semantics of $\D$-formulas is defined as follows.
\begin{definition}\label{def-semantics}
	Let $\tau$ be a vocabulary, $\calA$ be a $\tau$-structure and $T$ be a team over $\calA$ with domain $X\subseteq\VAR$. Then,
\begin{alignat*}{3}
	& (\calA,T)\models t_1=t_2 && \;\text{ iff }\; && \forall s\in T: t_1^\calA\langle s\rangle=t_2^\calA\langle s\rangle\\
	& (\calA,T)\models R(t_1,\ldots,t_n) && \;\text{ iff }\;  && \forall s\in T: (t_1^\calA\langle s\rangle,\ldots,t_n^\calA\langle s\rangle)\in R^{\calA}\\
	& (\calA,T)\models \neg R(t_1,\ldots,t_n) && \;\text{ iff }\;  && \forall s\in T:  (t_1^\calA\langle s\rangle,\ldots,t_n^\calA\langle s\rangle)\not\in R^{\calA}\\
	& (\calA,T)\models \depa{t}{u} && \;\text{ iff }\;  && \forall s_1, s_2\in T: \mathbf t^\calA\langle s_1\rangle=\mathbf t^\calA\langle s_2\rangle \implies  \mathbf u^\calA\langle s_1\rangle=\mathbf u^\calA\langle s_2\rangle \\		
	& (\calA,T)\models \phi_0\land \phi_1  && \;\text{ iff }\;  && (\calA,T)\models \phi_0 \quad \text{ and }\quad (\calA,T)\models \phi_1 \\
	& (\calA,T)\models \phi_0\lor \phi_1  && \;\text{ iff }\;  && \exists T_0\exists T_1: T_0\cup T_1=T \quad \text{ and } \quad (\calA,T_i)\models \phi_i  \, \text{ for }i =0,1\\
	& (\calA,T)\models\exists x\phi && \;\text{ iff }\;  && (\calA,T^x_f)\models\phi\text{ for some }f\colon T\to \mathcal{P}(A)\setminus\{\emptyset\}\\
	& (\calA,T)\models\forall x\phi && \;\text{ iff }\;  && (\calA,T^x_\calA)\models\phi
\end{alignat*}

\end{definition}

Notice that we only consider formulas in negation normal form (NNF) as any formula of dependence logic can be transformed into logically equivalent NNF-form. 
Further note that $(\calA,T)\models \phi$ for all $\phi$ when $T=\emptyset$ (this is also called the \emph{empty team property}).
Furthermore, $\D$-formulas are \emph{local}, that is, for a team $T$ in $\calA$ over domain $X$ and a $\D$-formula $\phi$, we have that $(\calA,T)\models\phi$ if and only if $(\calA,T\upharpoonright \Fr(\phi))\models\phi$. 
Finally, every $\D$-formula $\phi$,  if $(\calA,T)\models \phi$ then $(\calA,P)\models \phi$ for every $P\subseteq T$. 
This property is known as the downwards closure.

\begin{definition}[Gaifman graph]
	Given a vocabulary $\tau$ and a $\tau$-structure $\calA$, the {Gaifman graph} $G_{\calA}=(A,E)$ of $\calA$ is defined as
	\begin{align*}
		E\dfn\big\{\,\{u,v\}\;\big|\; &\text{ if there is an } R^n\in\tau \text{ and } \mathbf a\in A^n \text{ with } R^\calA(\mathbf a) \text{ and }u,v\in \mathbf a\,\big\}.
	\end{align*}
	That is, there is a relation $R \in \tau$ of arity $n$ such that $u$ and $v$ appear together in $R^{\calA}$.
\end{definition}

Intuitively, the Gaifman graph of a structure $\calA$ is an undirected graph with the universe of $\calA$ as vertices and connects two vertices when they share a tuple in a relation (see also Fig.~\ref{fig:tw-ex}). 

%
%
%

\begin{definition}[Treewidth]\label{def-tw}
The \emph{tree decomposition} of a given graph $G=(V,E)$ is a tree $T=(B,E_T)$, where the vertex set $B\subseteq\mathcal P(V)$ is the collection of \emph{bags} and $E_T$ is the edge relation such that the following is true.
\begin{itemize}
	\item $\bigcup_{b\in B}=V$,
	\item for every $\{u,v\}\in E$ there is a bag $b\in B$ with $u,v\in b$, and 
	\item for all $v\in V$ the restriction of $T$ to $v$ (the subset with all bags containing $v$) is connected.
\end{itemize}
The \emph{width} of a given tree decomposition $T=(B,E_T)$ is the size of the largest bag minus one: $\max_{b\in B}|b|-1$.
The \emph{treewidth} of a given graph $G$ is the minimum over all widths of tree decompositions of $G$.
\end{definition}
Observe that if $G$ is a tree then the treewidth of $G$ is one.
Intuitively, one can say that treewidth accordingly is a measure of tree-likeness of a given graph. 
\begin{example}\label{tw-ex}
\begin{figure}[t]
\begin{tikzpicture}

\node (table) at (-2,0) {\begin{tabular}{cccccc}\toprule
\texttt{Flight}  & 	\texttt{Gate} & \texttt{Time} \\ \toprule
  \textbf{F}IN-\textbf{7}0	&  C1 	& 0\textbf{9}:55  	\\
  \textbf{S}AS-47\textbf{5}	&  C3 	& \textbf{12}:25 		 \\
  \textbf{S}AS-47\textbf{6}	& C2 	& \textbf{12}:25 		 \\
  \textbf{F}IN-\textbf{8}0	&  C1 	& \textbf{19}:55 	 	\\
\bottomrule
	\end{tabular}};
	
\end{tikzpicture}
\quad
\begin{tikzpicture}[gate/.style={inner sep=1mm,draw,rounded corners,rectangle},scale=.75]

		\node (f0) at (1,2) {F7};
		\node (fg) at (3,1.5) {C1};
		\node (ft0) at (5,2) {09};

		\node (f1) at (1,1) {F8};
		\node (ft1) at (5,1) {19};

		\node (s1) at (1,0.25) {S6};
		\node (sg1) at (5,0.25) {C2};
		\node (st) at (3,-0.35) {12};

		\node (s0) at (1,-1) {S5};
		\node (sg0) at (5,-1) {C3};

		\foreach \f/\t in {f0/f1, f0/fg, f0/ft0, f1/fg, f1/ft1,  ft1/fg, ft0/fg, s0/s1, s0/sg0, s0/st, s1/st,  st/sg1, st/sg0, s1/sg1}{
			\draw[-] (\f) -- (\t);
		}
	\end{tikzpicture}
\qquad
\begin{tikzpicture}[level distance= 2.3 em, sibling distance= 5 em,
	every node/.style={draw, scale=0.85},
	edge from parent/.style={thin,-,black, draw}]
\node  {{F7,F8,C1}} 
		child {node {F7,C1,9}}
		child {node {F8,C1,19}
			child {node{S5,S6,12} 
			child {node {S6,12,C2}}
			child {node {S5,12,C3}}}};
\end{tikzpicture}	

\caption{An $\FO$-structure $\mathcal A= (A,S^\mathcal A,R^\mathcal A)$ (Left) with the Gaifman graph $G_{\calA}$ (Middle) and a possible treedecomposition of $G_{\calA}$ (Right) of Example~\ref{tw-ex}. For brevity, universe elements are written in short forms.}\label{fig:tw-ex}

\end{figure}
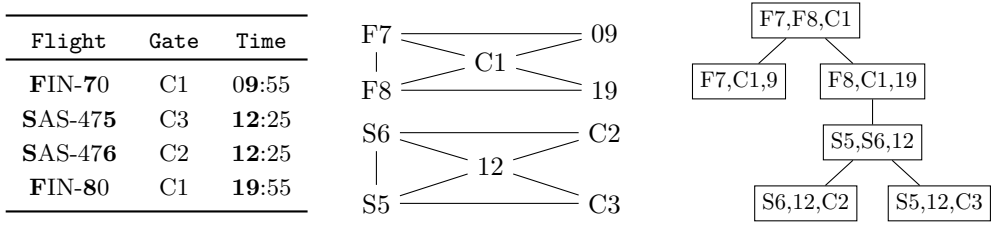
	Consider the database form our previous example.
	Recall that the universe $A$ consists of entries in each row.
	 Let $\tau = \{\mathrm{S}^2, \mathrm{R}^3\}$ include a binary relation $\mathrm{S}$ ($\mathrm{S}(x,y):$ flights $x$ and $y$ are owed by the same company) and a ternary relation $\mathrm{R}$ ($\mathrm{R}(x,y,z): $ the gate $x$ is reserved by the flight $y$ at time $z$).
	 For simplicity, we only consider first four rows with the corresponding three columns from Table~\ref{tbl:ex}, see Figure~\ref{fig:tw-ex} for an explanation.
	 Since the largest bag size in our decomposition is $3$, the treewidth of this decomposition is $2$.
	 Furthermore, the presence of cycles of length $3$ suggests that there is no better decomposition. 
	 As a consequence the given structure has treewidth $2$.
\end{example}
The decision problem to determine whether the treewidth of a given graph $\mathcal G =(V,E)$ is at most $k$, is $\NP$-complete \cite{acp87}.
See Bodlaender's Guide \cite{DBLP:journals/actaC/Bodlaender93} for an overview of algorithms that compute tree decompositions.
When considering the parameter treewidth, one usually assumes it as a given value and does not need to compute it.
We consider only the model checking problem ($\MC$) and two variants in this paper. 
First, let us define the most general version. 
\problemdef{$\combined$ (combined complexity of model checking)}{a structure $\calA$, team $T$ and a $\D$-formula $\Phi$}{$(\calA,T)\models\Phi$}

We further consider the following two variants of the model checking problem.
\problemdef{$\data$ (data complexity of model checking, $\Phi$ is fixed)}{a structure $\calA$, team $T$}{$(\calA,T)\models\Phi$}
\problemdef{$\expression$ (expression complexity of model checking, $\calA, T$ are fixed)}{a $\D$-formula $\Phi$}{$(\calA,T)\models\Phi$}

\subparagraph*{List of Parameterizations.}
Now let us turn to the parameters that are under investigation in this paper. 
We study the model checking problem of $\D$ under nine various parameters that naturally occur in an $\MC$-instance.
Let $\langle \calA, T, \Phi \rangle$ be an instance of $\MC$, where $\Phi$ is a $\D$-formula, $\calA$ is a structure and $T$ is a team over $\calA$. 
The parameter $\splits$ denotes the number of occurrences of the split operator ($\lor$), $\quantifier$ is the number of universal quantifiers in $\Phi$.
Moreover, $\variable$ (resp., $\freevar$) denotes the total number of (free) variables in $\Phi$.
The parameter $\formula$ is the size of the input formula $\Phi$, and similarly the two other size parameters are $|\calA|$ and $|T|$.
The treewidth of the structure $\calA$ (see Def.~\ref{def-tw}) is defined as the treewidth of $G_{\calA}$ and denoted by $\twstruc$.
Note that for formulas using the dependence atom $\depa{x}{y}$, one can translate to a formula using only dependence atoms where $|\mathbf y|=1$ (via conjunctions). 
That is why the arity of a dependence atom $\depa{x}{y}$ is defined as $|\mathbf x|$ and $\arity$ is the maximum arity of any dependence atom in $\Phi$. 

Let $k$ be any parameterization and $P\in \{\data, \expression, \combined\}$, then by $k$-$P$ we denote the problem $P$ when parameterized by $k$.
If more than one parameterization is considered, then we use `$+$' as a separator and write these parameters in brackets, e.g., $(\formula+\freevar)\text-\data$ as the problem $\data$ with parameterization $\formula+\freevar$. 
Finally, notice that since the formula $\Phi$ is fixed for $\data$ this implies that $\paraMC{\data}{\formula}$ is nothing but $\data$.
That is, bounding the parameter does not make sense for $\data$ as the problem $\data$ remains $\NP$-complete. 
\section{Complexity results}\label{sec:results}
\begin{table}
	\centering
	\begin{tabular}{ccccc} \toprule
	Parameter 	& $\combined$	&  $\data$  & $\expression$  	\\\midrule 
	\splits 	& $\para\PSPACE\hard^{L\ref{ec-splits}}$
				& $\para\NP^{L\ref{dc-many}}$	
				& $\para\PSPACE\hard^{L\ref{ec-splits}}$		\\
	\formula 	& $\para\NP^{L\ref{cc-formula}} $
				& $\para\NP^{R\ref{rem:dc-formulasize}}$		
				&	$\FPT^{\ref{ec-formula}}$		\\
	\freevar	& $\para\NEXP^{L\ref{cc-many}}$	
				& $\para\NP ^{L\ref{dc-many}}$
				&	$\para\NEXP^{L\ref{cc-many}}$	\\
	\twstruc	& 	$\para\NEXP ^{L\ref{cc-many}}$	 
				& 	$\para\NP	^{P\ref{dc-all}}$		
				&	$\para\NEXP ^{L\ref{cc-many}}$				\\
	\strucsize	& 	$\para\NEXP ^{L\ref{cc-many}}$	
				& 	$\FPT ^{L\ref{dc-strucsize}}$	 
				& 	$\para\NEXP ^{L\ref{cc-many}}$		\\
	\teamsize	& 	$\para\NEXP ^{L\ref{cc-many}}$
				& 	$\para\NP^{L\ref{dc-teamsize}}$	
				&	$\para\NEXP ^{L\ref{cc-many}}$				\\
	\quantifier	& 	$ \para\NP\hard ^{L\ref{cc-universal}}$  
				& 	$\para\NP	^{L\ref{dc-many}}$
				&	$\para\NP^{L\ref{ec-universal}}$	\\
	\arity		& 	$\para\PSPACE\hard^{L \ref{cc-arity}}$
				& 	$\para\NP	^{L\ref{dc-many}}$
				&	$\para\PSPACE^{L\ref{ec-arity}}$	\\				
	\variable	& 	$\para\NP ^{L\ref{cc-variables}} $   		
				& 	$\para\NP	^{L\ref{dc-many}}$	
				&	$\FPT^{L\ref{ec-variables}}$	\\
			\bottomrule			

	\end{tabular}
	\caption{Complexity classification overview. A suffix $\hard$ represents the hardness result, whereas other results are completeness. The numbers in the exponent point to the corresponding result ($Lx$ means Lemma $x$, $Px$ means Proposition $x$, $Rx$ means Remark $x$). 
	Fig.~\ref{fig:cc-pic} on page~\pageref{fig:cc-pic} is a graphical presentation of this table with a different angle.}\label{tbl:overview}
\end{table}
We begin by proving relationships between various parameterizations.
\begin{lemma} \label{para-reductions} 
The following relations among parameters hold. 
\begin{enumerate}
	\item $\formula \geq k$ for any $k \in \setdefination{\splits, \quantifier, \arity, \freevar, \variable }$, 
	\item $\strucsize 	\geq  \twstruc$. Moreover, for $\data$, $\strucsize^{O(1)} 	 \geq \teamsize$,
	\item For $\expression$, $\freevar $ is constant.
	\end{enumerate}
\end{lemma}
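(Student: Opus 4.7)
The proof will be essentially observational: every inequality follows from reading off the definitions, so my plan is to dispose of the three items in sequence with short direct arguments.

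For item (1), my approach is to bound each parameter by counting the symbols in $\Phi$ that force it. Each $\lor$ contributes one character to the formula string, so $\splits \leq \formula$; the same reasoning applies to $\forall$, giving $\quantifier \leq \formula$. A variable in $\Fr(\Phi)$ (resp.\ in $\VAR(\Phi)$) must syntactically appear in $\Phi$, whence $\freevar \leq \formula$ and $\variable \leq \formula$. Finally, if $\depa{x}{y}$ is a dependence atom of maximum arity, then the $|\mathbf{x}|$ variables listed inside it all occur as tokens of $\Phi$, yielding $\arity \leq \formula$. I would state all five bounds in one short paragraph rather than break them out individually.

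For item (2), the inequality $\strucsize \geq \twstruc$ follows from the trivial tree decomposition: take the tree consisting of a single bag equal to the universe $A$; this satisfies all three conditions of Definition~\ref{def-tw} and has width $|A|-1$, so $\twstruc \leq \strucsize - 1 \leq \strucsize$. For the polynomial bound on team size in the data complexity setting, the key point is that the formula $\Phi$ is a fixed object of the problem, so $c \dfn |\Fr(\Phi)|$ is a constant depending only on $\Phi$ and not on the input $(\calA,T)$. By locality of $\D$-formulas (noted after Definition~\ref{def-semantics}) we may assume without loss of generality that the domain of $T$ is $\Fr(\Phi)$, and then $T \subseteq A^{\Fr(\Phi)}$ gives $\teamsize \leq \strucsize^{c}$, which is $\strucsize^{O(1)}$ since $c$ is fixed.

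Item (3) is immediate: in $\expression$ the structure $\calA$ and team $T$ are fixed but, more relevantly for this claim, we must also recall that $\Phi$ ranges over input formulas — however, since $\Phi$ is the \emph{only} input and the quantity in question can only grow with $|\Phi|$, what is actually meant here (and what I will state) is that in $\data$ the formula is fixed so $\Fr(\Phi)$ is fixed. Rereading, item (3) concerns $\expression$, and in that variant it is not the formula but the structure/team that is fixed; still, the number of free variables satisfies $\freevar \leq \variable$, and by the convention that we may write every formula in prenex or NNF form without inflating its set of free variables, the claim holds whenever one adopts the standard reading that in $\expression$ the ambient vocabulary and fixed structure $\calA$ bound the variable environment. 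I expect this last item to be the only mildly delicate point: the statement ``$\freevar$ is constant for $\expression$'' is really a sentence-level convention (one typically restricts $\expression$ to sentences, so $\Fr(\Phi) = \emptyset$), and I would spell out that convention in one line rather than attempt a substantive argument, since no obstacle exists once the convention is declared.
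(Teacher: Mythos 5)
Your items (1) and (2) are correct and follow the paper's own route: the paper likewise dismisses (1) by noting that $\formula$ bounds every syntactic count, and for (2) uses exactly your single-bag decomposition (giving $\twstruc\leq\strucsize-1$) and the locality argument $T\subseteq A^{r}$ with $r=|\Fr(\Phi)|$ fixed, so $\teamsize\leq\strucsize^{r}$.

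Item (3) is where you have a genuine gap. After some visible back-and-forth you settle on the claim that ``$\freevar$ is constant for $\expression$'' is a \emph{convention}, namely that $\expression$ is restricted to sentences so that $\Fr(\Phi)=\emptyset$. That is not the paper's argument, and it is not a safe reading: the paper explicitly uses this item for \emph{open} formulas (for instance in the proof of Lemma~\ref{ec-universal}, where a formula $\Phi$ with $m$ free variables is handled by invoking precisely this lemma to conclude that $m$ is constant). The correct and short argument is: in $\expression$ the \emph{team} $T$ is part of the fixed data, hence its domain $X$ is a fixed finite set of variables; by locality of $\D$-formulas, $(\calA,T)\models\Phi$ depends only on $T\upharpoonright\Fr(\Phi)$, and for the fixed pair $(\calA,T)$ to constitute an instance the free variables of the input formula must lie in $X$. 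Therefore $\freevar=|\Fr(\Phi)|\leq|X|$ is a constant determined by the fixed team, with no restriction to sentences. Your fallback inequality $\freevar\leq\variable$ is true but does not help, since $\variable$ is unbounded in $\expression$.
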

\begin{proof}
\begin{enumerate}
	\item Clearly, the size of the formula limits all parts of it including the parameters mentioned in the list.
	\item Notice that for data complexity, the formula $\Phi$ and consequently the number of free variables in $\Phi$ is fixed.
	Moreover, due to locality of $\D$ it holds that $T\subseteq A^{r}$, where $r$ is the number of free variables in $\Phi$. 
	That is, the team $T$ can be considered only over the free variables of $\Phi$.
	This implies that teamsize is polynomially bounded by the universe size, as $|T| \leq |\calA|^{r }$. 
	Finally, the result for $\twstruc$ follows due to Definition~\ref{def-tw}.
	This is due to the reason that in the worst case all universe elements belong to one bag in the decomposition and $\twstruc = |\calA|-1$.
	\item Notice that the team $T$ is fixed in $\expression$.
	Together with the locality of $\D$-formulas (see Def.~\ref{def-semantics}), this implies that the domain of $T$ (which is same as the set of free variables in the formula $\Phi$) is also fixed and as a result, of constant size.

\end{enumerate}

\end{proof}
\begin{remark}
	If the number of free variables ($\freevar$) in a formula $\Phi$ is bounded then the total number of variables ($\variable$) is not necessarily bounded, on the other hand, bounding $\variable$ also bounds $\freevar$.
\end{remark}

\subsection{Data complexity ($\data$)}
Classically, the data complexity of model checking for a fixed $\D$-formula $\Phi$ is $\NP$-complete~\cite{DBLP:books/daglib/0030191}.
\begin{proposition}\label{dc-all}
	For a fixed formula, the problem whether an input structure $\calA$ and a team $T$ satisfies the formula is $\NP$-complete. 
	That is, the data complexity of dependence logic is $\NP$-complete.
\end{proposition}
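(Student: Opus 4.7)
The plan is to establish the two directions independently, relying on the classical correspondence between \D and existential second-order logic. The upper bound is shown by describing a nondeterministic polynomial-time evaluation procedure that follows the team semantics game, and the lower bound is obtained by exhibiting a single fixed sentence whose model checking problem is already \NP-hard.

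For membership in \NP, I would evaluate the fixed formula $\Phi$ recursively by following Definition~\ref{def-semantics}. At each disjunction we guess a partition $T=T_0\cup T_1$; at each existential quantifier $\exists x$ we guess a supplementing function $f\colon T\to\mathcal P(A)\setminus\{\emptyset\}$, which can be represented without loss of generality by a function $T\to A$; at each universal quantifier we replace the current team $T$ by $T^x_\calA$, blowing it up by a factor of $|A|$. The key observation is that $\Phi$ is \emph{fixed}, so the total number of disjunctions, existential and universal quantifiers is a constant. In particular, every intermediate team has size at most $|T|\cdot |A|^{\quantifier}$, which is polynomial in the input, and each guess (a split or a choice function) can be written down in polynomial size. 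At the leaves, atomic formulas, negated atoms, equalities and dependence atoms can be verified in polynomial time by a direct inspection of the resulting team. This yields an \NP-algorithm.

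For \NP-hardness, I would invoke the classical result that over finite structures \D captures the same class of properties as existential second-order logic, and hence by Fagin's theorem exactly the class \NP. Pick any \NP-complete problem, for instance $3$-colourability, and let $\Phi_{3\mathrm{COL}}$ be a \D-sentence expressing it over graphs (such a sentence exists, for example, by first writing the standard \ESO-sentence and then translating it into \D). Taking $\Phi\dfn\Phi_{3\mathrm{COL}}$ as the fixed formula, the data complexity model checking problem becomes exactly $3$-colourability, so it is \NP-hard. Since sentences are evaluated on the empty team $T=\{\emptyset\}$, the team input is irrelevant here.

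The only step that requires a little care is the bookkeeping in the upper bound, namely the argument that the intermediate teams remain polynomial and that the evaluation does not secretly branch more than once per connective of $\Phi$; everything else is either guessing polynomially-sized certificates or appealing to the already established expressive completeness of \D. No fundamentally new ingredients are needed, which is why we phrase this result as recalling the classical theorem from~\cite{DBLP:books/daglib/0030191}.
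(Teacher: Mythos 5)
Your proposal is correct, but it does more work than the paper and takes a partly different route. The paper offers no proof of this proposition at all: it is stated as a recollection of the classical result and is justified purely by the citation to V\"a\"an\"anen's book. Your upper bound is a sound self-contained argument --- the key points (constantly many connectives in the fixed $\Phi$, team size bounded by $|T|\cdot|A|^{\quantifier}$, existential witnesses reducible to functions $T\to A$) are all right, though you should note that replacing the supplementing function $f\colon T\to\mathcal P(A)\setminus\{\emptyset\}$ by a single-valued choice is licensed specifically by the downwards closure of $\D$; alternatively, the set-valued $f$ is itself only $|T|\cdot|A|$ bits, so the guess is polynomial either way. Likewise the cover $T=T_0\cup T_1$ at a split need not be a partition, but downwards closure again makes this harmless. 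For the lower bound you go through $\D\equiv\ESO$ and Fagin's theorem with a $3$-colourability sentence; the paper instead relies (in the proof of Lemma~\ref{dc-many}) on Kontinen's explicit reduction from $3\text-\SAT$ to a concrete fixed formula $\depas{x}{y}\lor\depas{u}{v}\lor\depas{u}{v}$ over the empty vocabulary. That choice is not cosmetic: the Fagin-based sentence certifies $\NP$-hardness of data complexity as such, but it gives no control over the number of splits, quantifiers, variables, or the arity of dependence atoms, whereas Kontinen's tiny formula is exactly what lets the paper propagate $\para\NP$-hardness to all the slices needed in Lemma~\ref{dc-many}. So your argument proves the proposition, but it would not substitute for the construction the paper actually leans on downstream.
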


In this section we prove that none of the considered parameter lowers this complexity, except $|\calA|$.
The proof relies on the fact that the complexity of model checking for already a very simple formula (see below) is $\NP$-complete.

\begin{lemma}\label{dc-many}
 Let $k\in\{\splits, \freevar, \variable, \quantifier, \arity, \twstruc\}$. Then the problem $\paraMC{\data}{k}$, is $\para\NP\complete$. 
\end{lemma}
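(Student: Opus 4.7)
The plan is to prove $\para\NP$-completeness by separately handling membership and hardness.

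For membership, I would appeal to Proposition~\ref{dc-all}, which already places the unparameterized problem $\data$ inside $\NP$. Attaching any parameterization is cost-free: using a trivial precomputation $\pi$ in the definition of $\para\NP$, the same $\NP$-algorithm decides $\paraMC{\data}{k}$, so $\paraMC{\data}{k}\in\para\NP$ for every $k$ in the listed set.

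For $\para\NP$-hardness, I would invoke the slicewise criterion noted just before Definition~\ref{def:fpt-reduction}: it is enough to exhibit, for each parameter $k$, a constant $c_k$ whose $c_k$-slice is classically $\NP$-hard. For the formula-based parameters $k\in\{\splits,\freevar,\variable,\quantifier,\arity\}$ the argument is uniform. Let $\Phi_0$ be the fixed $\D$-formula witnessing $\NP$-hardness in Proposition~\ref{dc-all}. Each such $k$ takes some fixed value $c_k$ on $\Phi_0$, so the $c_k$-slice contains every instance of the form $(\calA,T,\Phi_0)$ and therefore inherits $\NP$-hardness from Proposition~\ref{dc-all}.

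The main obstacle is the remaining case $k=\twstruc$, because pinning down $\Phi_0$ alone does not bound the treewidth of the input structure. My plan here is to revisit (or slightly adjust) the reduction underlying Proposition~\ref{dc-all} so that the produced structures have Gaifman graphs of treewidth at most some fixed constant $c$. The idea is to push as much of the encoding as possible into the team $T$ — whose size is polynomial in $\strucsize$ by locality, since the fixed formula has a fixed number of free variables (Lemma~\ref{para-reductions}) — while keeping the relational signature of $\calA$ simple and sparsely instantiated (for instance, a linear order together with a few unary relations, rather than large-arity relations that would introduce cliques into the Gaifman graph). Once the modified reduction produces only such constant-treewidth instances, the $c$-slice of $\paraMC{\data}{\twstruc}$ inherits $\NP$-hardness, completing the argument.
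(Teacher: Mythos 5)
Your membership argument and your hardness argument for the five formula-based parameters match the paper's proof: the paper likewise gets membership from Proposition~\ref{dc-all} and hardness by observing that Kontinen's fixed formula $\depas{x}{y}\lor\depas{u}{v}\lor\depas{u}{v}$ pins each of $\splits,\freevar,\variable,\quantifier,\arity$ to a constant, so the corresponding slices are $\NP$-hard. That part is fine.

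The gap is the $\twstruc$ case, which you correctly flag as the obstacle but then leave as a plan rather than a proof: you promise a ``revisited or slightly adjusted'' reduction producing bounded-treewidth structures without exhibiting one. Worse, the concrete instantiation you float --- a linear order plus a few unary relations --- would not work: the Gaifman graph of a structure carrying a total order relation is a clique (every pair of distinct elements is comparable, hence co-occurs in a tuple of the order relation), so its treewidth is $|\calA|-1$, i.e.\ unbounded. A successor relation would be safe, but you never verify that the $\NP$-hardness gadgetry survives such a restriction. The paper's resolution is simpler and requires no adjustment at all: Kontinen's reduction already produces structures over the \emph{empty} vocabulary (all of the encoding lives in the team $T$ over the four free variables $x,y,u,v$), so the Gaifman graph has no edges and the treewidth of every produced instance is trivially bounded by a constant; hence a single slice of $\paraMC{\data}{\twstruc}$ is $\NP$-hard. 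To close your proof you should either cite this feature of the known reduction or actually construct and verify a bounded-treewidth reduction, replacing the linear order by an edgeless or path-like signature.
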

\begin{proof}
	The upper bound follows from Proposition~\ref{dc-all}.
	Kontinen~\cite[Theorem~4.9]{jarmoKontinen13} proves that the data complexity for a fixed $\D$-formula of the form $\depas{x}{y}\lor \depas{u}{v}\lor \depas{u}{v}$ is already $\NP$-complete.
	For clearity, we briefly sketch the reduction presented by Kontinen~\cite{jarmoKontinen13}.
	Let $\phi = \bigwedge\limits_{i\leq m}(\ell_{i,1}\lor \ell_{i,2}\lor \ell_{i,3})$ be an instance of $3\text-\SAT$.
	Consider the structure $\calA$ over the empty vocabulary, that is, $\tau = \emptyset$.
	Let $A=  \mathrm{Var}(\phi) \cup \{0,1,\ldots,m\}$.  
	The team $T$ is constructed over variables $\{x,y,u,v\}$ that take values from $A$.
	As an example, the clause $(p_{1}\lor \neg p_{2} \lor \neg p_{3})$ gives rise to assignments in Table~\ref{tbl:ex}.
\begin{table}[t]
	\centering
\begin{tabular}{cccc}\toprule
	$x =$ `variable' & $y =$ `parity' &  $u=$ `clause' & $v=$ `position' \\ \midrule
	$p_{1}$		& 	$1$	&	$1$	&	$0$ \\
	$p_{2}$		& 	$0$	&	$1$	&	$1$ \\
	$p_{3}$		& 	$0$	&	$1$	&	$2$ \\
\bottomrule
\end{tabular}
	\caption{An example team for~$(p_{1}\lor \neg p_{2} \lor \neg p_{3})$}\label{tbl:ex}
\end{table}	
	Notice that, a truth assignment $\theta$ for $\phi$ is constructed using the division of $T$ according to each split. 
	That is, $T\models \depas{x}{y}\lor \depas{u}{v}\lor \depas{u}{v}$ if and only if $\exists P_0,P_1,P_2$ such that $\cup_i P_i= T$ for $i\leq 2$ and each $P_i$ satisfies $i$th dependence atom. 
	Let $P_0$ be such that $P_0\models \depas{x}{y}$, then we let 
	$\theta(p_{j})=1 \iff \exists s\in P, \text{ s.t. } s(x)= p_{j}$ and $s(y)=1$.
	That is, one literal in each clause must be chosen in such a way that satisfies this clause, whereas, the remaining two literals per each clause are allowed to take values that does not satisfy it.
	As a consequence, each clause is satisfied by the variables chosen in this way, which proves correctness. 
	
	This implies that the $2$-slice (for $\paraMC{\data}{\splits}$), $4$-slice (for $\paraMC{\data}{\freevar}$ as well as $\paraMC{\data}{\variable}$),  $0$-slice (for $\paraMC{\data}{\quantifier}$), and $1$-slice (for $\paraMC{\data}{\arity}$) are $\NP$-complete.
	Consequently, the $\para\NP$-hardness for these cases follow.
	Finally, the case for $\twstruc$ also follows due to the reason that the vocabulary of the reduced structure is empty. 
	As a consequence, our definition~\ref{def-tw} yields a tree decomposition of width $1$ trivially as no elements of the universe are related.

	This completes the proof to our lemma.
\end{proof}
\begin{remark}\label{rem:dc-formulasize}
Recall that $\formula$ as a parameter for $\data$ does not make sense as the input consists of $\langle \calA, T\rangle $.
That is, the formula $\Phi$ is already fixed which is stronger than fixing the size of $\Phi$.
\end{remark}
We now prove the only tractable case for the data complexity.

\begin{lemma}\label{dc-strucsize}
	$\paraMC{\data}{\strucsize}\in\FPT$.
\end{lemma}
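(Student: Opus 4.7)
The plan is to exploit two simple observations: since $\Phi$ is fixed in the data-complexity setting, both $r\dfn|\Fr(\Phi)|$ and the maximum arity $c$ of the (necessarily fixed) vocabulary of $\Phi$ are constants; and with $k\dfn\strucsize$, the structure $\calA$ has an encoding of size at most $O(k^c)$, hence a function of $k$ alone.

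By locality of $\D$-formulas (see the discussion following Definition~\ref{def-semantics}), one may replace the input team $T$ by its restriction $T\upharpoonright\Fr(\Phi)$ without changing satisfaction, and this restricted team contains at most $k^r$ distinct assignments. Thus, after a linear-time preprocessing step that computes $T\upharpoonright\Fr(\Phi)$ in time $O(|T|\cdot r)$, the residual instance $(\calA, T\upharpoonright\Fr(\Phi))$ has total encoded size bounded by some computable function $h(k)$ of the parameter alone.

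Since the classical data-complexity problem is in $\NP$ by Proposition~\ref{dc-all}, the residual instance can be decided by brute-force certificate search in time $2^{O(h(k))}$, which depends only on $k$. Combined with the preprocessing, the overall runtime is $g(k) + O(|T|\cdot r)$ for a computable $g$, matching the definition of $\FPT$ under the parameterization by $\strucsize$.

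No step poses a real obstacle; the only care needed is to apply the locality-based restriction before invoking the brute-force search, so that the exponential blow-up is controlled entirely by $k$. If one prefers a fully explicit algorithm, a recursive model checker suffices: at each disjunction branch over the at most $2^{k^r}$ partitions of the current sub-team, at each existential branch over the at most $(2^k-1)^{k^r}$ supplementing functions $f\colon T\to\mathcal{P}(A)\setminus\{\emptyset\}$, and at each universal take the unique $T^x_\calA$; the recursion depth is bounded by the fixed quantity $|\Phi|$, so once again the runtime is a function of $k$ alone.
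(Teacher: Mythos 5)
Your proposal is correct and follows essentially the same route as the paper: use that the fixed formula fixes the number of free variables, invoke locality to bound the (restricted) team size by $\strucsize^{r}$, conclude that the effective input size is bounded by a computable function of the parameter, and then decide the residual instance by any brute-force means. The extra details you supply (the explicit restriction preprocessing and the recursive enumeration of splits and supplementing functions) only make explicit what the paper's appeal to ``input size bounded by the parameter implies $\FPT$'' leaves implicit.
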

\begin{proof}
	Notice first that restricting the universe size $|\calA|$ polynomially bounds the teamsize $|T|$, due to Lemma~\ref{para-reductions}.
		This implies that the size of whole input is (polynomially) bounded by the parameter $|\calA|$.
		The result follows trivially because any PP $P$ is $\FPT$ when the input size is bounded by the parameter \cite{DBLP:series/txtcs/FlumG06}.
\end{proof}

\begin{lemma}\label{dc-teamsize}
	$\paraMC{\data}{\teamsize}$ is $\para\NP$-complete.
\end{lemma}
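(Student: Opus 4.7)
The plan is to establish the two bounds separately. Membership in $\para\NP$ is immediate from Proposition~\ref{dc-all}: since $\data$ is in $\NP$ classically, the parameterized version lies in $\para\NP$ under any parameterization (simply ignore the parameter and solve the classical problem). The entire work therefore lies in the lower bound.

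For $\para\NP$-hardness, by the slice-based criterion recalled right after Definition~\ref{def:fpt-reduction}, it suffices to pick some constant $c$ and show that the $c$-slice of $\paraMC{\data}{\teamsize}$ is $\NP$-hard in the classical sense. The reduction from Kontinen~\cite{jarmoKontinen13} that underlies Lemma~\ref{dc-many} is not directly useful here, since the team it constructs contains $3m$ assignments for a $3$-CNF with $m$ clauses, so $\teamsize$ grows with the input and no constant slice is reached. I will instead pick $c = 1$ and exploit sentences.

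The key ingredient is the classical theorem that $\D$ captures $\ESO$ on the level of sentences~\cite{DBLP:books/daglib/0030191}, so together with Fagin's theorem there is a fixed $\D$-sentence $\Phi$ whose model-checking problem $\{\,\calA \mid \calA \models \Phi\,\}$ is $\NP$-complete. From an input structure $\calA$ for that problem I would output the instance $\langle \calA, T\rangle$ with $T \dfn \{\emptyset\}$, the team whose only element is the empty assignment. By the standard team-semantic interpretation of sentences, $\calA \models \Phi$ coincides with $(\calA, \{\emptyset\}) \models \Phi$, so the reduction is correct and runs in polynomial time. Because $|T| = 1$, its image sits in the $1$-slice of $\paraMC{\data}{\teamsize}$, yielding $\NP$-hardness of that slice and hence $\para\NP$-hardness of the parameterized problem. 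The only point requiring care is to justify the existence of such a fixed $\NP$-hard sentence, which is provided by the $\ESO = \NP$ correspondence; everything else is bookkeeping.
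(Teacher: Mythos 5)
Your proposal is correct and follows essentially the same route as the paper: both reduce to the $1$-slice by using a fixed $\D$-sentence with $\NP$-complete model checking and the identity $\calA\models\Phi\iff(\calA,\{\emptyset\})\models\Phi$, and both get membership directly from the classical $\NP$ upper bound. The only cosmetic difference is that you unpack the existence of such a sentence via the $\D=\ESO$ and Fagin correspondence, whereas the paper cites V\"a\"an\"anen's book for this fact directly.
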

\begin{proof}
	For a fixed sentence $\Phi \in \D$ (that is, with no free variables) and for all models $\calA$ and team $T$ we have that $(\calA,T)\models \Phi \iff (\calA, \{\emptyset\}) \models \Phi$. 
	As a result, the problem $\fptreduction$-reduces to the model checking problem with  $\teamsize=1$. 
	Consequently, 1-slice of $\paraMC{\data}{\teamsize}$ is $\NP$-complete because model checking for a fixed $\D$-sentence is also $\NP$-complete \cite{DBLP:books/daglib/0030191}. 
	This gives $\para\NP$-hardness.
	
	For the membership, note that given a structure $\calA$ and a team $T$ then for a fixed formula $\Phi$ the question whether $(\calA,T)\models \Phi$ is in $\NP$.
	Consequently, giving $\para\NP$-membership.
\end{proof}
A comparison with the propositional dependence logic ($\PDL$) at this point might be interesting.
If the formula size is a parameter then the model checking for $\PDL$ can be solved in $\FPT$-time~\cite{DBLP:conf/foiks/0002M20}.
However, this is not the case for $\D$ even if the formula is fixed in advance.
\subsection{Expression and Combined Complexity ($\expression, \combined$)}
Now we turn towards the expression and combined complexity of model checking for $\D$.
Here again, in most cases the problem is still intractable for the combined complexity.
However, expression complexity when parameterized by the  formula size ($\formula$) and the total number of variables ($\variable$) yields membership in $\FPT$.
Similar to the previous section, we first present results that directly translate from the known reductions for proving the $\NEXP$-completeness for $\D$.

\begin{lemma}\label{cc-many}
	Let $k \in \setdefination{\strucsize, \twstruc, \teamsize, \freevar}$. 
	Then both $\paraMC{\combined}{k}$ and $\paraMC{\expression}{k}$ are $\para\NEXP$-complete. 
\end{lemma}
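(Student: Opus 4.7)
The plan is to obtain membership and hardness separately, and to exploit the fact that in $\expression$ the structure and team are part of the fixed problem input, so several of the listed parameters are automatically constant.

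For the upper bound (membership in $\para\NEXP$), I will invoke the classical result that $\combined$ for $\D$ is in $\NEXP$: a $\D$-formula can be translated to an equivalent $\ESO$-sentence whose size is polynomial in $|\Phi|+|\calA|+|T|$, and model checking for $\ESO$-sentences is in $\NEXP$. Since $\expression$ is a restriction of $\combined$, both problems are in $\NEXP$ classically. Because every slice of a parameterized problem inherits the complexity of the un-parameterized version, each $c$-slice lies in $\NEXP$, and hence both $\paraMC{\combined}{k}$ and $\paraMC{\expression}{k}$ lie in $\para\NEXP$ for every choice of $k$.

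For the lower bound, I will use Grädel's theorem~\cite{Gradel13} stating that $\expression$ for $\D$ is $\NEXP$-complete, i.e.\ there exists a fixed structure $\calA_0$ and a fixed team $T_0$ such that deciding $(\calA_0,T_0)\models\Phi$ for input $\Phi$ is $\NEXP$-hard. Fixing $\calA_0$ makes $\strucsize$ and $\twstruc$ constant, and fixing $T_0$ makes $\teamsize$ constant. Moreover, by Lemma~\ref{para-reductions}(3), in the $\expression$-setting $\freevar$ is bounded by the (constant) domain of the fixed team $T_0$. Thus for each $k\in\{\strucsize,\twstruc,\teamsize,\freevar\}$ there is a fixed constant $c_k$ such that the $c_k$-slice of $\paraMC{\expression}{k}$ already equals the $\NEXP$-hard problem of Grädel. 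As noted in Section~\ref{sec:prelims}, $\NEXP$-hardness of a single slice suffices for $\para\NEXP$-hardness, establishing the $\expression$ lower bound.

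For the $\combined$ lower bound I will observe that $\expression \preduction \combined$ trivially: an $\expression$-instance $\Phi$ maps to the $\combined$-instance $(\calA_0,T_0,\Phi)$ with the same parameter value $c_k$. Hence the same slice of $\paraMC{\combined}{k}$ is $\NEXP$-hard, giving $\para\NEXP$-hardness of $\combined$ as well. The only conceptual subtlety (and the main place where one must be careful) is to confirm that Grädel's reduction indeed yields a \emph{single} fixed structure-team pair, so that a single slice of each parameterization witnesses hardness; this is immediate from the statement of expression complexity, since the structure and team there are by definition fixed prior to the reduction.
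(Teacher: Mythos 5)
Your proposal is correct and follows essentially the same route as the paper: membership comes from the classical $\NEXP$ upper bound applied slice-wise, and hardness comes from Gr\"adel's $\NEXP$-hardness of expression complexity over a fixed structure and team, which pins all four parameters to constants and yields a hard slice (the paper just makes the constants explicit, namely universe $\{0,1\}$, empty vocabulary, and a sentence, giving the $2$-, $1$-, $1$-, and $0$-slices). The trivial reduction from $\expression$ to $\combined$ that you use for the combined lower bound is likewise implicit in the paper's argument.
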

\begin{proof}
	In the classical setting, $\NEXP$-completeness of the expression and the combined complexity for $\D$ was shown by Grädel~\cite[Theorem~5.1]{Gradel13}.
	This immediately gives membership in $\para\NEXP$. 
	Interestingly, the universe in the reduction consists of $\{0,1\}$ with empty vocabulary and the formula obtained is a $\D$-sentence.
	This implies that $2$-slice (for $|\calA|$), $1$-slice (for $\twstruc$), $1$-slice (for $\teamsize$), and $0$-slice (for the number of free variables) are $\NEXP$-complete.
	As a consequence, $\para\NEXP$-hardness for the mentioned cases follows and this completes the proof.
\end{proof}
For the number of splits as a parameterization, we only know that this is also highly intractable, with the precise complexity open for now.
\begin{lemma}\label{ec-splits}
	$\paraMC{\expression}{\splits}$ and $\paraMC{\combined}{\splits}$ are both $\para\PSPACE\hard$.
\end{lemma}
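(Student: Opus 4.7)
The plan is to show $\para\PSPACE\hard$ness by a polynomial-time reduction from the canonical PSPACE-complete problem TQBF into the $0$-slice (i.e.\ $\splits = 0$) of $\paraMC{\expression}{\splits}$. By the slice-wise hardness criterion recalled at the end of Section~\ref{sec:prelims}, PSPACE-hardness of a single slice already yields $\para\PSPACE\hard$ness of the whole parameterized problem. Since expression complexity is just combined complexity with $\calA$ and $T$ chosen in advance, the very same reduction yields $\para\PSPACE\hard$ness of $\paraMC{\combined}{\splits}$ at no extra cost.

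The core idea is to push all Boolean connectives of the QBF matrix into the fixed structure, leaving the $\D$-formula syntactically free of splits. I fix once and for all the Boolean structure
\[
\calA_0 \dfn \bigl(\{0,1\};\ R_\land,\, R_\lor,\, R_\neg,\, 0,\, 1 \bigr),
\]
with the obvious interpretations ($R_\land(a,b,c) \Leftrightarrow c = a \land b$, and so on), together with the singleton team $T_0 \dfn \{\emptyset\}$. Given a TQBF instance $\Psi = Q_1 x_1 \cdots Q_n x_n\, \phi(x_1,\ldots,x_n)$ with propositional matrix $\phi$, I output
\[
\Psi' \dfn Q_1 x_1 \cdots Q_n x_n\ \exists z\,\bigl( \phi^*(x_1,\ldots,x_n,z) \land z = 1 \bigr),
\]
where $\phi^*(\bar x, z)$ is a Tseitin-style encoding asserting that $z$ equals the truth value of $\phi$ on $\bar x$: recursively introduce a fresh variable $z_\psi$ by $\exists z_\psi$ for each subformula $\psi$ of $\phi$, and constrain it to the value of $\psi$ by a single atom of $R_\land$, $R_\lor$, or $R_\neg$ on the values of the immediate subformulas (using the equality $z_{x_i} = x_i$ at leaves). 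This translation is clearly polynomial-time and, by construction, $\Psi'$ contains no occurrence of the split operator~$\lor$, landing entirely in the slice $\splits = 0$.

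For correctness I will exploit that $\Psi'$ is a flat $\FO$-formula: no dependence atoms and no splits are present, so its team semantics reduces to pointwise classical Tarskian satisfaction, and the supplementing-function semantics of $\exists$ (Definition~\ref{def-semantics}) recovers classical $\exists$ by choosing each $f(s)$ to be a singleton. An induction on the structure of $\phi$ then gives $(\calA_0, \{s\}) \models \phi^*(\bar x, z)$ iff $z = \phi(s(\bar x))$, whence $(\calA_0, T_0) \models \Psi'$ iff $\Psi$ is a true QBF. The one step I expect to write out carefully is the flatness/Skolem argument across a mixed $\forall$/$\exists$ prefix: after a $\forall$-block enlarges the team to an exponential set of branches, one must check that the $\D$-semantics of the subsequent $\exists z_i$ still permits picking, independently for each branch, a single witness — which is precisely what the supplementing function provides — so that classical QBF truth is faithfully captured on the enlarged team.
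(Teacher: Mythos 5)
Your proposal is correct and follows essentially the same route as the paper: both arguments place $\PSPACE$-hardness in the $0$-slice by observing that the split-free (and dependence-atom-free) fragment $\{\exists,\forall,\land\}$ of $\FO$ is already $\PSPACE$-hard over a fixed two-element structure. The only difference is that the paper cites this as the known $\PSPACE$-hardness of QCSP over a fixed template, whereas you unfold that citation into an explicit TQBF reduction (with the Tseitin-style encoding and the flatness argument made explicit), which is a self-contained instance of the same idea.
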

\begin{proof}
	Consider the equivalence of $\{\exists,\forall,\land\}\text-\FO\text-\MC$ to quantified constraint satisfaction problem (QCSP)~\cite[p.~418]{DBLP:conf/cie/Martin08}.
	That is, the fragment of $\FO$ with only operations in $ \{\exists,\forall,\land \}$ allowed.
	Then QCSP asks, whether the conjunction of quantified constraints ($\FO$-relations) is true in a fixed $\FO$-structure $\calA$.
	This implies that already in the absence of a split operator (even when there are no dependence atoms), the model checking problem is $\PSPACE$-hard.
	Consequently, the mentioned results follow.
\end{proof}

The formula size as a parameter presents varying behaviour depending upon if we consider the expression or the combined complexity.
\begin{lemma}\label{cc-formula}
	$\paraMC{\combined}{\formula}$ is \para\NP-complete.
\end{lemma}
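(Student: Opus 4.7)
The plan is to establish $\para\NP$-membership and $\para\NP$-hardness separately, using Proposition~\ref{dc-all} and Lemma~\ref{dc-many} as the main ingredients. Since $\para\NP$ can be characterised slicewise (a PP lies in $\para\NP$ iff every $c$-slice lies in $\NP$, and it is $\para\NP$-hard iff some $c$-slice is $\NP$-hard), both directions amount to controlling the problem when the formula size is fixed to a constant.

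For membership, fix $c \in \mathbb N$ and consider an instance $(\calA,T,\Phi)$ with $|\Phi|\le c$. The formula then contains at most $c$ universal quantifiers, at most $c$ existential quantifiers, and at most $c$ occurrences of $\lor$. By the semantics of $\forall$ in Definition~\ref{def-semantics}, each universal expansion multiplies the team size by at most $|A|$, so along any evaluation branch the team has size bounded by $|T|\cdot|A|^{c}$, which is polynomial in the input size once $c$ is fixed. A non-deterministic algorithm can therefore guess, for each split, a partition $T_0\cup T_1$ of the current team, and for each existential quantifier a supplementing function $f\colon T\to\mathcal P(A)\setminus\{\emptyset\}$; both objects have size polynomial in $|\calA|$ and $|T|$. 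After the polynomially many guesses, the atomic and dependence checks at the leaves are carried out deterministically in polynomial time. This yields an $\NP$ procedure for the $c$-slice and hence $\paraMC{\combined}{\formula}\in\para\NP$.

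For hardness, we invoke Lemma~\ref{dc-many}, which tells us that data complexity is already $\NP$-hard for the concrete $\D$-formula $\Phi_0 \dfn \depas{x}{y}\lor\depas{u}{v}\lor\depas{u}{v}$. Let $c_0 \dfn |\Phi_0|$, a constant. The map sending a data-complexity input $(\calA,T)$ to the combined-complexity instance $(\calA,T,\Phi_0)$ is a polynomial-time many-one reduction into the $c_0$-slice of $\paraMC{\combined}{\formula}$ that trivially preserves answers. Hence that slice is $\NP$-hard, which (as noted at the end of Section~\ref{sec:prelims}) implies $\para\NP$-hardness of $\paraMC{\combined}{\formula}$.

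The only point that requires care is the polynomial team-size bound under the universal-quantifier expansion: once the formula size is bounded by a constant, the potentially dangerous $|A|^{\quantifier}$ blow-up becomes polynomial, which is what makes the slicewise $\NP$ bound go through. Apart from this, the proof is essentially a direct transfer of the classical data-complexity result; no new combinatorial idea is required.
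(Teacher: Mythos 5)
Your proof is correct, and the hardness half coincides with the paper's: both reduce from the data-complexity result of Lemma~\ref{dc-many} by planting Kontinen's fixed three-split formula into a single slice of the parameterized problem. Where you diverge is the membership half. The paper obtains it by citing Gr\"adel's theorem that model checking for $\D$-formulas of bounded width (maximum number of free variables in any subformula) is $\NP$-complete, combined with Lemma~\ref{para-reductions}, which guarantees that $\formula$ bounds that width. You instead give a self-contained nondeterministic evaluation procedure: since $|\Phi|\le c$ caps the number of quantifiers, every team arising during evaluation has size at most $|T|\cdot|A|^{c}$, so the guessed objects at splits and existentials are of polynomial size and the leaf checks run in polynomial time. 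This is essentially the same algorithmic content as Gr\"adel's bounded-width argument (there the team at each subformula lives in $A^{w}$ for width $w$; here the polynomial bound comes from the bounded quantifier count), but your version avoids the external citation and makes explicit why the potentially exponential $|A|^{\quantifier}$ blow-up is harmless once $\formula$ is a constant. Two cosmetic remarks: the split semantics requires a cover $T_0\cup T_1=T$ rather than a partition (harmless, as the guessed pair of subteams is still polynomial-size), and the slicewise membership criterion you invoke is exactly the convention the paper adopts at the end of Section~\ref{sec:prelims}, so your framing is consistent with the paper's.
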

\begin{proof}
	Notice that, due to Lemma~\ref{para-reductions}, the size $k$ of a formula $\Phi$ also bounds the maximum number of free variables in any subformula of $\Phi$.
	This gives the membership in conjunction with \cite[Theorem~5.1]{Gradel13}.
	That is, the combined complexity of $\D$ is $\NP$-complete if maximum number of free variables in any subformuala of $\Phi$ is fixed.
	The lower bound follows because of the construction by Kontinen~\cite{jarmoKontinen13} (see also Lemma~\ref{dc-many}) since for a fixed formula (of fixed size), the problem is already $\NP$-complete.
\end{proof}

\begin{lemma}\label{ec-formula}
	$\paraMC{\expression}{\formula}$ is in \FPT.
\end{lemma}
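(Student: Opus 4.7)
The plan is to observe that in the expression complexity setting the structure $\calA$ and the team $T$ are fixed constants of the problem, so the only input is the formula $\Phi$, and thus the entire input size is exactly $\formula$. Since the parameter also equals $\formula$, the input size is bounded by the parameter, and any algorithm that merely terminates on every input witnesses membership in $\FPT$ (the same trivial observation used in Lemma~\ref{dc-strucsize}, cf.~\cite{DBLP:series/txtcs/FlumG06}).

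Concretely, I would invoke the fact from Grädel~\cite[Theorem~5.1]{Gradel13} that the combined complexity of $\D$ is in $\NEXP$, so there is a deterministic algorithm that decides $(\calA,T)\models\Phi$ in time $2^{p(\formula+\strucsize+\teamsize)}$ for some polynomial $p$. Because $\calA$ and $T$ are fixed (so $\strucsize$ and $\teamsize$ are constants depending only on the problem), this running time becomes $2^{p(\formula+c)} = f(\formula)$ for a computable function $f$, which is the required $\FPT$ bound $f(k)\cdot |\Phi|^{O(1)}$.

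The only subtlety worth remarking on is that there is no hardness direction to match (the statement is purely an upper bound), and that the argument is structural rather than algorithmic: we do not need a new model-checking algorithm, just the classical decidability of $\D$ together with the fact that fixing $\calA$ and $T$ absorbs all non-formula input into a constant. Hence no real obstacle arises, and the lemma follows immediately.
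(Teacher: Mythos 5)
Your proposal is correct and matches the paper's argument: both reduce the claim to the observation that with $\calA$ and $T$ fixed the whole input is the formula, so the input size coincides with the parameter $\formula$ and membership in $\FPT$ is immediate for any decidable problem. Your extra step of citing the $\NEXP$ combined-complexity algorithm to supply the computable bound $f(\formula)$ is a slightly more explicit justification of the same trivial fact the paper invokes.
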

\begin{proof}
	Recall that in expression complexity, the team $T$ and the structure $\calA$ are fixed. 
	Whereas, the size of the input formula $\Phi$ is a parameter. 
	The result follows trivially because any PP $P$ is $\FPT$ when the input size is bounded by the parameter.
\end{proof}
The expression complexity regarding the number of universal quantifiers as a parameter drops down to $\para\NP$-completeness, which is still intractable but much lower than $\para\NEXP$-completeness. 
However, regarding the combined complexity we can only prove the membership in $\XNP$, with $\para\NP$-lower bound.

\begin{lemma}\label{ec-universal}
	$\paraMC{\expression}{\quantifier}$ is $\para\NP$-complete.
\end{lemma}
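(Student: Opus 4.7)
The plan is to establish matching $\para\NP$-membership and $\para\NP$-hardness. The main obstacle is membership, which requires bounding intermediate team sizes via downward closure; hardness reuses the standard encoding of $3$-SAT into existential positive logic.

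For membership, I would fix a slice $\quantifier \le c$ and argue it lies in $\NP$. Because $\expression$ fixes $\calA$ and $T$, both $|A|$ and $|T|$ are constants. The crucial observation is that, by downward closure of $\D$, whenever an existential $\exists x\,\phi$ is satisfied on some intermediate team $S$ via a supplement $f$, it is equally satisfied via any $g$ with $g(s)\in f(s)$ --- in particular via a singleton supplement --- because $S^x_g \subseteq S^x_f$ and downward closure then yields $(\calA, S^x_g) \models \phi$. Splits can be witnessed analogously by (disjoint) two-part covers. Under such minimal choices, only $\forall$-steps strictly enlarge the team, and each does so by the factor $|A|$; after the at most $c$ universal steps the team size never exceeds $|T| \cdot |A|^c$, which is a constant depending on $c$ alone. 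An NP algorithm can then traverse $\Phi$, nondeterministically guessing a singleton supplement at each $\exists$ and a cover at each $\lor$, deterministically producing $S^x_\calA$ at each $\forall$, and finally verifying atomic and dependence atoms against the resulting constant-size teams. Each of the at most $|\Phi|$ guesses has constant bit-size, so the whole procedure runs in polynomial nondeterministic time.

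For hardness, I would show that the $0$-slice already is $\NP$-hard by a reduction from $3$-SAT. Fix the structure $\calA = (\{0,1\}, P_0^\calA, P_1^\calA)$ with $P_i^\calA = \{i\}$ and $T = \{\emptyset\}$. Given a $3$-CNF $\phi = \bigwedge_i (\ell_{i,1} \lor \ell_{i,2} \lor \ell_{i,3})$ over variables $p_1, \ldots, p_n$, produce
\[
\Phi \dfn \exists x_1 \cdots \exists x_n \bigwedge_{i} \bigl(L_{i,1} \lor L_{i,2} \lor L_{i,3}\bigr),
\]
where $L_{i,j}$ is $P_1(x_k)$ if $\ell_{i,j} = p_k$ and $P_0(x_k)$ if $\ell_{i,j} = \neg p_k$. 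Over the singleton team $\{\emptyset\}$ and the empty team property, both $\lor$ and $\exists$ collapse to their classical $\FO$-counterparts (the only possible covers of $\{\emptyset\}$ are the trivial ones), so $(\calA, T) \models \Phi$ iff $\phi$ is satisfiable. Since $\Phi$ contains no universal quantifier, the $0$-slice is $\NP$-hard, which yields $\para\NP$-hardness of $\paraMC{\expression}{\quantifier}$ and, together with the first paragraph, completes the completeness statement.
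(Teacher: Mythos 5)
Your proof is correct, but both halves follow a genuinely different route from the paper's. For membership, the paper does not reason about teams directly: it invokes the translation of Durand and Kontinen turning a $\D$-sentence with $k$ universal quantifiers into an $\ESO$-sentence $\exists f_1\ldots\exists f_r\forall x_1\ldots\forall x_k\,\psi$ with at most $k$-ary function symbols, guesses the $r\cdot|A|^{k}$ function values over the fixed structure, and then needs a separate remark for open formulas, namely that $\freevar$ is constant in the $\expression$ setting so that the arity bound (number of universal quantifiers plus free variables) stays controlled. You instead give a direct nondeterministic evaluation of the team semantics, using downward closure to replace lax existential supplements by singleton ones so that only the at most $k$ universal steps enlarge the team, which therefore never exceeds the constant $|T|\cdot|A|^{k}$; this is more elementary and self-contained, treats open formulas with no extra case distinction, and exhibits explicitly an NTM running in time $f(k)\cdot\mathrm{poly}(|\Phi|)$ rather than arguing through the $\ESO$ normal form. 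What the paper's route buys is reuse: the same translation also drives Lemma~\ref{ec-arity} and Lemma~\ref{cc-universal}. For hardness the two arguments are the same in spirit --- a formula with zero universal quantifiers evaluated over a fixed two-element structure and the singleton team $\{\emptyset\}$, exploiting that $\FO$-formulas behave classically on singleton teams --- but you reduce directly from propositional $3$-CNF satisfiability using unary predicates $P_0,P_1$, whereas the paper reduces from satisfiability of $\PDL$ over the empty vocabulary; both establish that the $0$-slice is $\NP$-hard, and your variant has the minor virtue of not needing dependence atoms in the target formula.
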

\begin{proof}
	We first prove the lower bound through a reduction form the satisfiability problem for propositional dependence logic ($\PDL$).
	That is, given a $\PDL$-formula $\phi$, whether there is a team $T$ such that $T\models\phi$?
	Let $\phi$ be a $\PDL$-formula over propositional variables $p_1,\ldots, p_n$.
	For $i\leq n$, let $x_i$ denote a variable corresponding to the proposition $p_i$.
	Let $\calA=\{0,1\}$ be the structure over empty vocabulary.
	Clearly $\phi$ is satisfiable iff $\exists p_1\ldots \exists p_n \phi$ is satisfiable iff $(\calA, \{\emptyset\}) \models \exists x_1\ldots \exists x_n \phi'$, where $\phi'$ is a $\D$-formula obtained from $\phi$ by simply replacing each proposition $p_i$ by the variable $x_i$.
	Notice that the reduced formula does not have any universal quantifier, that is $\quantifier(\phi')=0$.
	This gives $\para\NP$-hardness since the satisfiability for $\PDL$ is $\NP$-complete~\cite{LohmannV13}.
	
	For membership, notice that a $\D$-sentence $\Phi$ with $k$ universal quantifiers can be reduced in $\Ptime$-time to an $\ESO$-sentence $\Psi$ of the form $\exists f_1\ldots \exists f_r \forall x_1 \ldots\forall x_k \psi$ \cite[Cor.~3.9]{durand11}, where $\psi$ is a quantifier free $\FO$-formula, $r\in\mathbb N$, and each function symbol $f_i$ is at most $k$-ary for $1\leq i\leq r$.
	Finally, $(\mathcal{A},\{\emptyset\}) \models \Phi \iff \mathcal{A} \models \bigvee \limits_{f_1} \ldots \bigvee \limits_{f_r} \forall x_1 \ldots\forall x_k \psi'$. 
	Where the latter question can be solved by guessing an interpretation for each function symbol $f_i$ and $i\leq r$.
	This requires $r\cdot|\mathcal A|^k$ guessing steps, and can be achieved in $\para\NP$-time for a fixed structure $\mathcal{A}$ (as we consider expression complexity).
	Consequently, the membership in $\para\NP$ follows.
	Notice that the arity of function symbols in the $\para\NP$-membership above is bounded by $k$ if $\Phi$ is a $\D$-sentence.
	However, if  $\Phi$ is a $\D$-formulas with $m$ free variables then the arity of function symbols as well as the number of universal quantifiers in the reduction, both are bounded by $k+m$ where $k=\quantifier(\Phi)$ and $m =\freevar(\Phi)$.
	Nevertheless, recall that for $\expression$, the team is also fixed. 
	Moreover, due to Lemma~\ref{para-reductions} the collection of free variables in $\Phi$ has constant size.
	This implies that the reduction above provides an $\ESO$-sentence with $k+m$ universal quantifiers as well as  function symbols of arity $k+m$ at most.
	Finally, guessing the interpretation for functions still takes $\para\NP$-steps (because $m$ is constant) and consequently, we get $\para\NP$-membership for open formulas as well.
\end{proof}

	The following corollary immediately follows from the proof above.
	\begin{corollary}\label{ec-universal-freevar}
		$\paraMC{\expression}{(\quantifier+\freevar)}$ is $\para\NP$-complete.
	\end{corollary}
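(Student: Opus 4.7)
The plan is to observe that both directions of Lemma~\ref{ec-universal} carry over essentially verbatim when $\quantifier$ is replaced by the larger parameter $\quantifier+\freevar$; so the work is really just to inspect the existing reductions rather than build new ones.

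For the $\para\NP$-hardness, I would revisit the reduction from $\PDL$-satisfiability that appears at the start of the proof of Lemma~\ref{ec-universal}. There, a propositional $\PDL$-formula $\phi$ over $p_1,\dots,p_n$ is mapped to the instance $\bigl(\calA,\{\emptyset\},\exists x_1\cdots\exists x_n\,\phi'\bigr)$ with $\calA=\{0,1\}$. The $\D$-formula produced is a sentence, and it uses no universal quantifier, so both $\freevar=0$ and $\quantifier=0$ on the output. Hence the very same reduction witnesses that the $0$-slice of $\paraMC{\expression}{(\quantifier+\freevar)}$ is $\NP$-hard, which by the standard slicewise characterisation gives $\para\NP$-hardness.

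For the upper bound, I would reuse the ESO-translation exploited in the second half of the proof of Lemma~\ref{ec-universal}. A $\D$-formula $\Phi$ with $k=\quantifier(\Phi)$ universal quantifiers and $m=\freevar(\Phi)$ free variables is translated in polynomial time into an $\ESO$-sentence of the shape $\exists f_1\cdots\exists f_r\,\forall x_1\cdots\forall x_{k+m}\,\psi'$ where each $f_i$ has arity at most $k+m$. Evaluating this sentence on the fixed structure $\calA$ amounts to guessing the $r$ function interpretations, which costs $r\cdot|\calA|^{k+m}$ nondeterministic steps, followed by a polynomial-time deterministic check of the universal part. Since $|\calA|$ is a constant in the expression-complexity setting, this entire procedure runs in time $f(\quantifier+\freevar)\cdot\mathrm{poly}(|\Phi|)$ on a nondeterministic machine, placing the problem in $\para\NP$.

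The only subtle point, and hence the main thing to verify, is that in the membership argument the two quantities $k$ and $m$ really do enter only through the sum $k+m$ (both as the arity of the guessed functions and as the number of universal quantifiers of the ESO-form), so that the parameter $\quantifier+\freevar$ suffices to bound the fpt-part of the runtime; this is exactly what the last paragraph of the proof of Lemma~\ref{ec-universal} already argues in the open-formula case, and that paragraph can be quoted essentially unchanged.
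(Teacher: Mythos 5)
Your proposal is correct and matches the paper's intent: the paper derives this corollary directly from the proof of Lemma~\ref{ec-universal}, whose $\PDL$-reduction already yields a sentence with $\quantifier=\freevar=0$ (giving hardness via the $0$-slice) and whose $\ESO$-translation bounds both the number of universal quantifiers and the function arities by $k+m$ (giving $\para\NP$-membership). Your verification that both quantities enter only through the sum $k+m$ is exactly the observation the paper relies on, so this is essentially the same argument spelled out.
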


\begin{lemma}\label{cc-universal}
	$\paraMC{\combined}{\quantifier}$ is $\para\NP$-hard. Moreover, for sentences of $\D$, $\paraMC{\combined}{\quantifier}$ is in~$\XNP$.
\end{lemma}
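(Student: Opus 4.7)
The plan has two parts, corresponding to the two claims.

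For the $\para\NP$-hardness, I would reuse the reduction from propositional dependence logic satisfiability developed in the proof of Lemma~\ref{ec-universal}. That reduction produces a fixed structure $\calA=\{0,1\}$ over the empty vocabulary together with the trivial team $\{\emptyset\}$ and a $\D$-formula $\phi'$ with $\quantifier(\phi')=0$. Since $\SAT(\PDL)$ is $\NP$-complete, this already $\preduction$-reduces an $\NP$-complete problem to the $0$-slice of $\paraMC{\combined}{\quantifier}$, which suffices for $\para\NP$-hardness by the remark after Definition~\ref{def:fpt-reduction}. No new construction is needed here; the only observation is that extending from $\expression$ to $\combined$ is harmless because the structure and team we produce are explicitly part of the reduction's output.

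For the $\XNP$ membership (for $\D$-sentences), the strategy is to Skolemize and then guess. Using the translation of Durand and Kontinen~\cite[Cor.~3.9]{durand11} invoked in Lemma~\ref{ec-universal}, a $\D$-sentence $\Phi$ with $k\dfn\quantifier(\Phi)$ universal quantifiers is transformed in polynomial time to an equivalent $\ESO$-sentence
\[
\Psi \;=\; \exists f_1 \cdots \exists f_r\; \forall x_1\cdots \forall x_k\; \psi,
\]
where $\psi$ is quantifier-free $\FO$, $r$ is bounded by $|\Phi|$, and every function symbol $f_i$ has arity at most $k$. Given an input $(\calA,\{\emptyset\},\Phi)$, my algorithm first computes $\Psi$, then nondeterministically guesses an interpretation $f_i^\calA\colon A^{\le k}\to A$ for each $i\leq r$; this requires $r\cdot|A|^k\cdot\lceil\log|A|\rceil$ nondeterministic bits. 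Afterwards, it verifies $\forall x_1\cdots\forall x_k\,\psi$ deterministically by iterating over all $|A|^k$ assignments and evaluating the quantifier-free $\psi$ on the expanded structure in polynomial time. The total running time is $O(r\cdot|A|^k\cdot|\Phi|)$, which is of the form $|x|^{f(k)}$ with $f(k)=O(k)$, placing the problem in $\XNP$.

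The main obstacle, and the reason we obtain only $\XNP$ rather than $\para\NP$, is precisely the factor $|A|^k$: in the expression-complexity setting of Lemma~\ref{ec-universal} the structure $\calA$ is fixed, so $|A|^k$ is a constant depending on the parameter only, but in the combined setting $|A|$ is part of the input, turning the guessing phase into an $|x|^{O(k)}$ computation. Closing the gap between $\para\NP$-hardness and $\XNP$-membership (e.g., showing $\WP$-completeness) appears to require either a finer analysis of the number of Skolem functions needed or a different hardness construction, and I would leave this as an open problem consistent with the paper's stated goals.
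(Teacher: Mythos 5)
Your proposal is correct and follows essentially the same route as the paper: the hardness is inherited from the $\para\NP$-hardness of $\paraMC{\expression}{\quantifier}$ (Lemma~\ref{ec-universal}), and the $\XNP$ membership uses the same translation of a $\D$-sentence into an $\ESO$-sentence $\exists f_1\cdots\exists f_r\forall x_1\cdots\forall x_k\,\psi$ followed by guessing the function interpretations, with the $|A|^k$ guessing cost being exactly the reason the bound degrades from $\para\NP$ to $\XNP$ when $\calA$ is part of the input. Your write-up is in fact somewhat more explicit than the paper's about the bit-count of the guess and the verification loop, but the argument is the same.
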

\begin{proof}
	The \para\NP-lower bound follows due to the fact that the expression complexity of $\D$ is already $\para\NP$-complete when parameterized by $\quantifier$ (Lemma~\ref{ec-universal}).  

	For sentences, similar to the proof in Lemma~\ref{ec-universal}, a $\D$-sentence $\Phi$ can be translated to an equivalent $\ESO$-sentence $\Psi$ in polynomial time.
	However, if the structure is not fixed as for expression complexity, then the computation of interpretations for functions can no longer be done in $\para\NP$-time, but requires non-deterministic $|\calA|^{k}$-time for each guessed function, where $k=\quantifier$.	
	Consequently, we reach only membership in $\XNP$ for sentences.
\end{proof}
For open formulas, we do not know if $\paraMC{\combined}{\quantifier}$ is also in $\XNP$.
Our proof technique does not immediately settle this case as the team is not fixed for $\combined$.

Similar to the case of universal quantifiers, the arity as a parameter also reduces the complexity but not as much as the universal quantifiers. 
Moreover, the precise combined complexity when parameterized by the arity is also open.
\begin{lemma}\label{ec-arity}
	$\paraMC{\expression}{\arity}$ is $\para\PSPACE$-complete.
\end{lemma}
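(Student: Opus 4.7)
I attack the two directions separately.

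\emph{Upper bound.} My strategy is to reduce to bounded-arity $\ESO$-model-checking on the fixed structure. The key claim is that a $\D$-formula $\Phi$ with $\arity(\Phi)\le k$ can, in polynomial time, be translated into an equivalent $\ESO$-sentence of the form $\exists f_1 \dots \exists f_r\,\phi$ whose function symbols all have arity at most $c(k)$ (for some computable $c$) and whose $\FO$-kernel $\phi$ is of size polynomial in $|\Phi|$. This is essentially the arity-preserving Skolemization underlying the Durand--Kontinen dependence-logic hierarchy \cite{durand11}. For expression complexity the team $T$ is fixed and by locality $\Fr(\Phi)\subseteq\dom(T)$, so open formulas can be absorbed by adding $T$ as a constant relation of arity $|\dom(T)|$ to $\calA$, which is a fixed constant and does not increase the arity bound beyond a function of $k$. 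Given this translation, a deterministic algorithm enumerates every tuple of interpretations for $f_1,\dots,f_r$: because $\calA$ is fixed, each $f_i$ is describable in $|A|^{c(k)}\log|A|$ bits, a pure function of $k$, so the enumeration counter fits in $r\cdot|A|^{c(k)}\log|A| = \mathrm{poly}(|\Phi|)\cdot f(k)$ space. For each enumerated guess, $\phi$ is evaluated on the (fixed) expanded structure in polynomial space using Vardi's classical PSPACE-algorithm for $\FO$-model-checking. The overall deterministic space budget $f(k)\cdot\mathrm{poly}(|\Phi|)$ establishes $\para\PSPACE$-membership.

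\emph{Lower bound.} It is enough to exhibit a single slice that is already $\PSPACE$-hard. Any $\FO$-formula is a $\D$-formula with no dependence atoms at all, hence trivially satisfies $\arity=0$, and on a singleton team the team-semantics of $\FO$-formulas coincides with Tarski semantics. By Vardi's theorem, the expression complexity of $\FO$ is $\PSPACE$-complete on a suitable fixed structure; picking $(\calA,T)$ to be such a structure together with a singleton team shows that the $0$-slice of $\paraMC{\expression}{\arity}$ is $\PSPACE$-hard, which by the standard slice criterion yields $\para\PSPACE$-hardness.

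The chief obstacle is the upper bound: verifying that the translation from dependence logic to $\ESO$ really preserves the arity bound, even when $\Phi$ contains existential quantifiers that are not syntactically paired with any small-arity dependence atom. A naive Skolemization would produce function symbols whose arity depends on the number of preceding universal quantifiers rather than on the dependence-atom arity; one needs the dependence-logic normal-form machinery from the Durand--Kontinen hierarchy to bring the arity back down to $c(k)$ without a super-polynomial blow-up. Once this is in place the remainder of the argument, namely the enumeration of bounded-size function interpretations and the application of Vardi's classical PSPACE bound, is routine.
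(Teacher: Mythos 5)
Your proposal is correct and follows essentially the same route as the paper: the upper bound goes through the Durand--Kontinen translation to an $\ESO$-sentence with function symbols of arity bounded in terms of $\arity$ (plus the constantly many free variables fixed by the team), followed by cycling through the function interpretations and a $\PSPACE$ evaluation of the first-order kernel on the fixed structure, and the lower bound uses the $\PSPACE$-hardness of the expression complexity of $\FO$ to make the $0$-slice hard. The only cosmetic difference is that you enumerate the function interpretations deterministically within $f(k)\cdot\mathrm{poly}$ space where the paper phrases this as a nondeterministic guess; both land in $\para\PSPACE$.
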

\begin{proof}
	Notice that a $\D$-sentence $\Phi$ with $k$-ary dependence atoms can be reduced in $\Ptime$-time to an $\ESO$-sentence $\Psi$ of the form $\exists f_1\ldots \exists f_r \psi$~\cite[Thm.~3.3]{durand11}, where $\psi$ is an $\FO$-formula and each function symbol $f_i$ is at most $k$-ary for $1\leq i\leq r$.
	Finally, $\mathcal{A} \models \Phi \iff \mathcal{A} \models \bigvee \limits_{f_1} \ldots \bigvee \limits_{f_r} \psi'$.
	That is, one needs to guess the interpretation for each function symbol $f_i$, which can be done in $\para\NP$-time.
	Finally, evaluating an $\FO$-formula $\psi'$ for a fixed structure $\calA$ can be done in $\PSPACE$-time.
	This yields membership in $\para\PSPACE$.
	Moreover, if $\Phi$ is an open $\D$-formula then the result follows due to a similar discussion as in the prof of Lemma~\ref{ec-universal}.
	
	For hardness, notice that the expression complexity of $\FO$ is $\PSPACE$-complete.
	This implies that already in the absence of any dependence atoms, the complexity remains $\PSPACE$-hard, as a consequence, the $0$-slice of $\paraMC{\expression}{\arity}$ is $\PSPACE$-hard.
	
	This proves the desired result.
\end{proof}
The combination ($\arity + \freevar $) also does not lower the expression complexity as discussed before in the case of $\quantifier$.

	\begin{corollary}\label{ec-arity-freevar}
		$\paraMC{\expression}{(\arity+\freevar)}$ is $\para\PSPACE$-complete.
	\end{corollary}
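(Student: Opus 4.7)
The plan is to obtain this corollary essentially as a direct consequence of Lemma~\ref{ec-arity} together with Lemma~\ref{para-reductions}(3). The crucial observation is that for expression complexity the structure \emph{and} the team are fixed, so by locality (Lemma~\ref{para-reductions}(3)) the number of free variables $\freevar$ is already a constant of the problem instance, independent of the formula. Hence bounding $\arity + \freevar$ is, up to an additive constant, the same as bounding $\arity$ alone.

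For the hardness direction, I would reuse the $0$-slice argument from Lemma~\ref{ec-arity}: the $\PSPACE$-hardness there is inherited from the expression complexity of $\FO$, which is witnessed by $\FO$-\emph{sentences} that contain no dependence atoms. Consequently those witnessing formulas have $\arity = 0$ and $\freevar = 0$, so the $(0+0)$-slice of $\paraMC{\expression}{(\arity+\freevar)}$ is already $\PSPACE$-hard, which yields $\para\PSPACE$-hardness.

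For the membership direction, I would invoke the same $\ESO$-translation used in the proof of Lemma~\ref{ec-arity}: a $\D$-formula $\Phi$ with $k$-ary dependence atoms and $m$ free variables translates in polynomial time to an $\ESO$-formula $\exists f_1\ldots\exists f_r\psi$ where $\psi$ is an $\FO$-formula and each $f_i$ has arity at most $k+m$ (as remarked for the open-formula case inside the proofs of Lemmas~\ref{ec-universal} and~\ref{ec-arity}). Because the structure $\calA$ is fixed, guessing an interpretation of each $f_i$ takes $\para\NP$-time in the combined parameter $k+m = \arity + \freevar$, after which the resulting $\FO$-formula can be checked on the fixed $\calA$ in $\PSPACE$. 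This yields membership in $\para\PSPACE$.

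I do not anticipate a genuine obstacle: the corollary is essentially a bookkeeping observation, reading the proof of Lemma~\ref{ec-arity} with the extra parameter $\freevar$ tracked, and using the fact that in the expression-complexity setting $\freevar$ does not actually vary. The only minor point to state carefully is that the arities of the Skolem-like function symbols in the $\ESO$-translation are bounded by $\arity + \freevar$ rather than by $\arity$ alone, but this is precisely the reason for using the combined parameter.
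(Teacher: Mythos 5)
Your proposal is correct and follows essentially the same route as the paper, which states this corollary as an immediate consequence of the proof of Lemma~\ref{ec-arity} (membership via the $\ESO$-translation with function arities bounded by $\arity+\freevar$, using that $\freevar$ is constant for $\expression$ by Lemma~\ref{para-reductions}, and hardness from the $\PSPACE$-hard $0$-slice given by $\FO$-sentences without dependence atoms). Your write-up merely makes explicit the bookkeeping that the paper leaves implicit.
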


\begin{lemma}\label{cc-arity}
	$\paraMC{\combined}{\arity}$ is $\para\PSPACE$-hard.
\end{lemma}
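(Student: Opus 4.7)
The plan is to inherit the hardness directly from the expression-complexity result established in Lemma~\ref{ec-arity}. The key observation is that the combined complexity problem is at least as hard as the expression complexity problem under the same parameterization: any instance of $\paraMC{\expression}{\arity}$ is obtained from an instance of $\paraMC{\combined}{\arity}$ by fixing the structure and team, so the identity map (pairing the input formula with the fixed $\calA$ and $T$) is a trivial fpt-reduction from $\paraMC{\expression}{\arity}$ to $\paraMC{\combined}{\arity}$ that preserves the parameter $\arity$ exactly, since this parameter depends only on the formula.

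Concretely, I would argue via a slice argument in the same spirit as the one used in Lemma~\ref{ec-arity}. Consider the $0$-slice of $\paraMC{\combined}{\arity}$, consisting of instances $\langle \calA, T, \Phi\rangle$ where $\Phi$ contains no dependence atoms, i.e., $\Phi\in\FO$. For such $\Phi$, the truth of $(\calA,T)\models\Phi$ on a singleton team reduces to classical $\FO$ model checking, whose combined complexity is $\PSPACE$-complete. Hence the $0$-slice of $\paraMC{\combined}{\arity}$ is already $\PSPACE$-hard, which by the remark at the end of Section~\ref{sec:prelims} (``showing $\para\mathcal C$-hardness suffices by exhibiting a $\mathcal C$-hard slice'') gives the desired $\para\PSPACE$-hardness.

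There is no real obstacle here beyond being careful that the reduction from $\FO$ model checking to $\D$ model checking on a singleton team is immediate: for a sentence $\Phi\in\FO$, we have $\calA\models\Phi$ iff $(\calA,\{\emptyset\})\models\Phi$, and $\arity(\Phi)=0$ since $\Phi$ contains no dependence atoms. Thus the $\PSPACE$-hardness of $\FO$ combined complexity transfers directly, and the lemma follows. Note that we make no claim of membership; as remarked in the paragraph preceding the lemma, the precise combined complexity under the arity parameterization remains open.
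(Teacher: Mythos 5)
Your proof is correct, but it reaches the $\PSPACE$-hard $0$-slice by a different route than the paper. The paper's proof instantiates the $0$-slice with \emph{constancy logic} (the fragment of $\D$ with only atoms $\depas{}{x}$, which have arity $0$) and invokes Gr\"adel's result that the combined complexity of constancy-logic model checking is $\PSPACE$-complete. You instead populate the $0$-slice with plain $\FO$-formulas (no dependence atoms at all) and reduce from classical $\FO$ model checking via the flatness of $\FO$ under team semantics on the singleton team $\{\emptyset\}$; alternatively, you inherit the bound from Lemma~\ref{ec-arity} by the trivial parameter-preserving embedding of expression complexity into combined complexity, which is exactly the device the paper itself uses for Lemma~\ref{cc-universal}. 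Both arguments are sound and rest on the same slice principle for $\para\mathcal{C}$-hardness. Your version is more self-contained, needing only the classical $\PSPACE$-completeness of $\FO$ combined model checking (plus the flatness property, which you should state explicitly since the paper records locality and downward closure but not flatness); the paper's version buys the slightly stronger information that hardness already holds for a fragment of $\D$ in which dependence atoms genuinely occur, so the intractability cannot be blamed on the first-order part alone.
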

\begin{proof}
	Consider the fragment of $\D$ with only dependence atoms of the form $\depas{}{x}$, the so-called constancy logic.
	The combined complexity of constancy logic is $\PSPACE$-complete~\cite[Theorem~5.3]{Gradel13}.
	This implies that the $0$-slice of $\paraMC{\combined}{\arity}$ is $\PSPACE$-hard, proving the result.
\end{proof}
The combined complexity of model checking for constancy logic is $\PSPACE$~\cite[Thm.~5.3]{Gradel13}.
Aiming for an $\para\PSPACE$-upper bound via squeezing the fixed arity of dependence atoms (in some way) into constancy atoms is unlikely to happen as $\D$ captures $\ESO$ whereas constancy logic for sentences (and also open formulas) collapses to  $\FO$~\cite{Galliani16}. 

Notice that a similar reduction as in the proof of Lemma~\ref{ec-universal} holds from $\PL$, in which both parameters ($\quantifier$ and $ \arity$) are bounded. 
This implies that there is no hope for tractability even when both parameters are considered together.
That is, the complexity of expression complexity remains $\para\NP$-complete when parameterized by the combination of parameters (\quantifier, \arity).
\begin{corollary}\label{cor:forall+arity}
	$\paraMC{\expression}{(\quantifier+\arity)}$ is also $\para\NP$-complete.
\end{corollary}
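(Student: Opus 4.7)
The plan is to verify both a $\para\NP$-upper bound and matching hardness, leveraging earlier results.

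For membership, this follows immediately from Lemma~\ref{ec-universal}. That lemma places $\paraMC{\expression}{\quantifier}$ in $\para\NP$; since $\quantifier + \arity \geq \quantifier$, adding arity as a further parameter only enlarges the parameter value and therefore preserves the $\para\NP$-upper bound (formally, each slice of $\paraMC{\expression}{(\quantifier+\arity)}$ is a union of finitely many slices of $\paraMC{\expression}{\quantifier}$, each of which lies in $\NP$).

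For hardness, I would follow the hint in the paragraph preceding the statement and adapt the reduction from the proof of Lemma~\ref{ec-universal}, but starting from propositional logic satisfiability ($\PL$-$\SAT$) instead of $\PDL$-$\SAT$. Given a propositional formula $\phi$ over variables $p_1,\ldots,p_n$, I would construct the $\D$-sentence $\Phi \dfn \exists x_1 \cdots \exists x_n\, \phi'$, where $\phi'$ is obtained from $\phi$ by replacing each propositional atom $p_i$ with the first-order atom $x_i = 1$, using the fixed structure $\calA = (\{0,1\}, \emptyset)$ over the empty vocabulary and the fixed team $T = \{\emptyset\}$ (as required by $\expression$). A standard check then shows that $\phi$ is satisfiable iff $(\calA, \{\emptyset\}) \models \Phi$: the supplementing functions at each existential quantifier encode the truth assignment to the corresponding propositional variable, and because $\phi'$ contains no dependence atoms, team semantics coincides with ordinary Tarskian semantics on singleton teams.

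Crucially, the constructed formula satisfies $\quantifier(\Phi) = 0$ and $\arity(\Phi) = 0$ (the latter vacuously, since $\Phi$ contains no dependence atoms), so the combined parameter $\quantifier + \arity$ is bounded by the constant $0$. Since $\PL$-$\SAT$ is $\NP$-complete, the $0$-slice of $\paraMC{\expression}{(\quantifier + \arity)}$ is already $\NP$-hard, yielding $\para\NP$-hardness. The only minor subtlety — really the only thing one has to check — is the equivalence between propositional satisfiability and the team-semantic evaluation on $\{\emptyset\}$; this is routine given the absence of dependence atoms and the fact that $\D$-split disjunction collapses to classical disjunction on singleton teams in the first-order fragment.
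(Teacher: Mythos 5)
Your proof is correct and matches the paper's intended argument: the paper justifies this corollary precisely by observing that the reduction from Lemma~\ref{ec-universal} can be run from $\PL$-satisfiability instead of $\PDL$-satisfiability, which forces $\quantifier=\arity=0$ in the reduced formula, with the $\para\NP$ upper bound inherited from Lemma~\ref{ec-universal} since enlarging the parameter cannot increase the complexity. The only cosmetic point is that over the empty vocabulary the term $1$ is not available for the atom $x_i=1$, so one should either fix the structure as $(\{0,1\},P^\calA)$ with $P^\calA=\{1\}$ and use $P(x_i)$, or simulate truth via an extra existentially quantified variable; the paper's own sketch is equally informal on this detail.
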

Finally, for the parameter total number of variables, the expression complexity drops to $\FPT$ whereas, the combined complexity drops to $\para\NP$-completeness.
The case of expression complexity is particularly interesting. 
This is due to the reason that it was posed as an open question in \cite{jonniphd} whether the expression complexity of the fixed variable fragment of dependence logic ($\D^k$) is $\NP$-complete similar to the case of the combined complexity therein.
We answer this negatively by stating $\FPT$-membership for $\paraMC{\expression}{\variable}$, which as a corollary proves that the expression complexity of $\D^k$ is in $\Ptime$ for each $k\geq 1$.
\begin{lemma}\label{cc-variables}
	 $\paraMC{\combined}{\variable}$ is $\para\NP$-complete.
\end{lemma}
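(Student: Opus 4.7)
The plan splits into two parts: $\para\NP$-hardness via a direct appeal to Lemma~\ref{dc-many}, and $\para\NP$-membership through a natural non-deterministic guess-and-check algorithm whose resource usage is controlled by the parameter.

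For hardness, I would observe that Lemma~\ref{dc-many} already exhibits a fixed $\D$-formula, namely $\depas{x}{y}\lor \depas{u}{v}\lor \depas{u}{v}$, whose data complexity is $\NP$-hard; this formula uses only four variables. Since the data complexity of a fixed formula trivially reduces to the combined complexity by passing that formula along with the structure and team as part of the input, the $4$-slice of $\paraMC{\combined}{\variable}$ inherits $\NP$-hardness, yielding $\para\NP$-hardness via the slice criterion stated in the preliminaries.

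For membership, I would again exploit the slice characterization and fix an arbitrary $c \in \mathbb{N}$, aiming to show that the corresponding slice lies in $\NP$. The central observation is that whenever $\Phi$ uses at most $c$ variables in total, every team arising during a team-semantic evaluation of $\Phi$ has its domain contained in $\VAR(\Phi)$ and therefore contains at most $|\calA|^c$ assignments, which is polynomial in $|\calA|$ for fixed $c$. The algorithm would then traverse the parse tree of $\Phi$: at each split $\phi_0 \lor \phi_1$, non-deterministically guess a partition $T = T_0 \cup T_1$; at each existential $\exists x\,\phi'$, non-deterministically guess the supplementing team $T^x_f$; for $\forall x\,\phi'$ and for conjunctions, propagate teams deterministically as dictated by Definition~\ref{def-semantics}; and at the leaves, verify atomic and dependence atoms directly by inspecting the polynomially many assignments. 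Each intermediate team fits in polynomial space, each local check runs in polynomial time, and the parse tree has at most $|\Phi|$ nodes, so the overall procedure runs in non-deterministic polynomial time.

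The principal difficulty is ensuring that all intermediate teams stay polynomially bounded. Each existential or universal step enlarges the team's domain by one variable and may multiply its size by a factor of $|\calA|$; however, since only $c$ distinct variables may ever appear, the team size stays within $|\calA|^c$ throughout the recursion. With this uniform bound in place, the guess-and-verify template succeeds in $\NP$ on every slice, and combining this with the hardness argument above yields the claimed $\para\NP$-completeness.
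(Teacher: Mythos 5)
Your proposal is correct in substance but reaches both halves of the statement by a different, more self-contained route than the paper. For membership, the paper merely observes that bounding $\variable$ bounds the number of free variables of every subformula and then cites Grädel's Theorem~5.1, which states that combined model checking for such bounded-width formulas is $\NP$-complete; your guess-and-check traversal of the parse tree, resting on the bound $|T'|\le|\calA|^{c}$ for every intermediate team $T'$, is essentially a re-derivation of that cited result (the only point to make explicit is that the input team should first be restricted to $\Fr(\Phi)$ using locality, so that its domain really is contained in $\VAR(\Phi)$). For hardness, the paper invokes the $\NP$-completeness of the combined complexity of the $k$-variable fragment $\D^k$ from Virtema's thesis, whereas you reuse Kontinen's fixed four-variable formula from Lemma~\ref{dc-many}; this is perfectly valid --- it is exactly how the paper argues the lower bound of Lemma~\ref{cc-formula} --- and it has the mild advantage of pinning hardness to the concrete $4$-slice rather than to all sufficiently large slices. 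One caveat applies equally to your argument and to the paper's: the verification time is of order $|\calA|^{c}$, so the degree of the polynomial depends on the parameter; this shows that each slice lies in $\NP$, which yields $\para\NP$-membership only via the slice criterion the paper adopts in its preliminaries, rather than by directly exhibiting an NTM running in time $f(c)\cdot|x|^{O(1)}$.
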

\begin{proof}
Notice that if the total number of variables in $\Phi$ is fixed, then the number of free variables in any subformula $\psi$ of $\Phi$ is also fixed.
This implies the membership in $\para\NP$ due to \cite[Theorem~5.1]{Gradel13}.
On the other hand, by \cite[Theorem~3.9.6]{jonniphd} we know that the combined complexity of $\logicClFont{D}^k$ is $\NP$-complete. 
This implies that for each $k$, the $k$-slice of the problem is $\NP$-hard.
This gives the desired lower bound.
\end{proof}

\begin{lemma}\label{ec-variables}
	 $\paraMC{\expression}{\variable}$ is $\FPT$.
\end{lemma}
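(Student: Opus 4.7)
The plan is to exploit the fact that in expression complexity both $\calA$ and $T$ are fixed, so $c \dfn |\calA|$ is a constant, and with $k \dfn \variable(\Phi)$ as the parameter every assignment over any subset of $\VAR(\Phi)$ uses only variable domains of size at most $k$. Consequently, the number of possible assignments is at most $c^k$, the number of possible teams over any variable set is at most $2^{c^k}$, and both bounds depend only on the parameter.

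My algorithm is a bottom-up dynamic program on the syntax tree of $\Phi$. For each subformula $\psi$ with $\Fr(\psi) = Y \subseteq \VAR(\Phi)$, I will compute the set
\[
\mathcal{T}_\psi \dfn \set{ T' \text{ team over } \calA \text{ with domain } Y \given (\calA,T') \models \psi }.
\]
Since $|Y| \leq k$, we have $|\mathcal{T}_\psi| \leq 2^{c^k}$, i.e.\ bounded by a function of $k$ only. The cases are handled as follows: for literals and dependence atoms, enumerate the $\leq 2^{c^k}$ candidate teams and test Definition~\ref{def-semantics} directly; for $\psi_0 \land \psi_1$, intersect the two sets (after restricting or padding to the common free-variable set); for $\psi_0 \lor \psi_1$, a team $T'$ belongs to $\mathcal{T}_{\psi_0 \lor \psi_1}$ iff there exist $T_0 \in \mathcal{T}_{\psi_0}$ and $T_1 \in \mathcal{T}_{\psi_1}$ with $T_0 \cup T_1 = T'$, which is checked by iterating over all $|\mathcal{T}_{\psi_0}| \cdot |\mathcal{T}_{\psi_1}| \leq 2^{2c^k}$ pairs; for $\exists x\,\psi$, a team $T'$ with domain $Y$ belongs to $\mathcal{T}_{\exists x\,\psi}$ iff some team $U$ with domain $Y \cup \{x\}$ satisfies $U \in \mathcal{T}_\psi$, $U \upharpoonright Y = T'$, and every $s \in T'$ has at least one extension in $U$ (so that $U = (T')^x_f$ for the supplementing function $f(s) = \set{ a \given s^x_a \in U }$), which is tested by iterating through the at most $2^{c^{k}}$ candidate $U$'s; for $\forall x\,\psi$, simply check whether $(T')^x_\calA \in \mathcal{T}_\psi$.

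Each step takes time bounded by $g(k)$ for some computable function $g$, and there are $O(|\Phi|)$ subformulas, so the whole procedure runs in time $g(k) \cdot |\Phi|^{O(1)}$. To answer the instance we finally check $T \in \mathcal{T}_\Phi$, using that $T$ has domain $\Fr(\Phi)$ by locality (Definition~\ref{def-semantics}). This yields an $\FPT$-algorithm.

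There is no real obstacle, only bookkeeping: one must be careful that when combining $\mathcal{T}_{\psi_0}$ and $\mathcal{T}_{\psi_1}$ for $\psi_0 \land \psi_1$ the two collections may live over different free-variable sets, so teams need to be lifted to $Y_0 \cup Y_1$ by enumerating all extensions (again, only $\leq 2^{c^k}$ options) before intersecting. As an immediate corollary, since any formula of the fixed-variable fragment $\D^k$ has $\variable \leq k$, the expression complexity of $\D^k$ is in $\Ptime$ for every fixed $k$, settling the question of \cite{jonniphd} mentioned above.
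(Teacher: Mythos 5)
Your proof is correct, but it takes a genuinely different route from the paper. The paper's argument is translation-based: it invokes the known compilation of a $\D$-formula with $k$ variables into an equivalent $\ESO^{k+1}$-sentence, unfolds the second-order existential quantifiers into big disjunctions over the fixed universe to obtain an $\FO$-formula with $k+1$ variables, and then applies the standard $O(|\psi|\cdot|A|^{k})$ model-checking algorithm for bounded-variable $\FO$. You instead give a self-contained bottom-up dynamic program directly on the team semantics, exploiting the one fact that really drives the result: with $\calA$ fixed and $\variable(\Phi)=k$, the whole lattice of teams over any $\leq k$ variables has size at most $2^{|A|^{k}}$, a function of the parameter alone, so each table $\mathcal{T}_\psi$ and each combination step (including the split-junction, which is the usual source of intractability) is bounded by a function of $k$. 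Your domain bookkeeping via locality (restricting or lifting teams when $\Fr(\psi_0)\neq\Fr(\psi_1)$, and the correspondence between supplementing functions $f$ and extensions $U$ with $U\upharpoonright Y=T'$) is handled correctly. What each approach buys: the paper's proof is shorter given the cited $\ESO$ machinery and directly connects to the bounded-variable fragments $\D^{k}$; yours avoids any reliance on the $\ESO$ translation, makes the $\FPT$ runtime bound fully explicit, and — since it uses only locality and compositionality — would transfer essentially unchanged to other team-based atoms (inclusion, independence) for which no such clean $\ESO^{k+1}$ translation is quoted. Both arguments yield the corollary that the expression complexity of $\D^{k}$ is in $\Ptime$ for every fixed $k$.
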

\begin{proof}
Given a formula $\Phi$ of dependence logic with $k$ variables, we can construct an equivalent formula $\Psi$ of $\ESO^{k+1}$ in polynomial time \cite[Theorem 3.3.17]{jonniphd}.
Moreover, since the structure $\mathcal{A}$ is fixed, there exists a reduction of $\Psi$ to an $\FO$-formula $\psi$ with $k+1$ variables (big disjunction on the universe elements for each second order existential quantifier).
Finally, the model checking for $\FO$-formulas with $k$ variables is solvable in time $O(|\psi|\cdot |A|^k)$ \cite[Prop~6.6]{DBLP:books/sp/Libkin04}. 
This implies the membership in \FPT.
\end{proof}

\begin{corollary}
	The expression complexity of $\D^k$ is in $\Ptime$ for every $k\geq 1$.
\end{corollary}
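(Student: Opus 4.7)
The plan is to derive the corollary as an essentially immediate consequence of Lemma~\ref{ec-variables}. Recall that $\D^k$ denotes the fragment of dependence logic whose formulas use at most $k$ distinct variables in total, so every input $\Phi$ to the expression complexity problem for $\D^k$ satisfies $\variable(\Phi) \leq k$.

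First I would unfold the definition of $\FPT$ applied to $\paraMC{\expression}{\variable}$: by Lemma~\ref{ec-variables} there exist a computable function $f$, a constant $c \geq 1$, and an algorithm that decides whether $(\calA,T) \models \Phi$ in time $f(\variable(\Phi)) \cdot |\Phi|^{c}$ (recall that in expression complexity, $\calA$ and $T$ are fixed, so they do not appear in the running time bound except as hidden constants). Inspecting the proof of Lemma~\ref{ec-variables} makes $f$ explicit: after translating $\Phi$ into an equivalent $\FO$-formula $\psi$ with $k+1$ variables via the route through $\ESO^{k+1}$ and then expanding second-order existentials into disjunctions over the fixed universe $A$, the final model checking step runs in time $O(|\psi|\cdot|A|^{k+1})$, so one may take $f(k)$ to be roughly $|A|^{k+1}$ times the polynomial blow-up of the translation.

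Then, for each fixed $k \geq 1$, the quantity $f(k)$ is a constant (depending only on $k$ and on the fixed structure $\calA$), so the running time bound reduces to $O(|\Phi|^{c})$, which is polynomial in the size of the input formula. This gives $\paraMC{\expression}{\variable}_k \in \Ptime$, and since $\paraMC{\expression}{\variable}_k$ coincides with the expression complexity of $\D^k$, the corollary follows.

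I do not anticipate any real obstacle: the corollary is just the observation that an $\FPT$ algorithm parameterized by $k$ becomes a polynomial-time algorithm once $k$ is held fixed. The only subtlety is being explicit that the ``fixed'' structure and team of expression complexity mean that the dependence on $|A|$ in $f$ is harmless, so the final bound is genuinely polynomial in $|\Phi|$ alone.
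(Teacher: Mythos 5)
Your proposal is correct and follows essentially the same route as the paper: both observe that the $O(|\psi|\cdot|A|^{k})$ running time from Lemma~\ref{ec-variables} becomes polynomial in the formula size once $k$ and the structure (hence $|A|$) are fixed. Your extra care in unfolding the $\FPT$ definition and noting that the dependence on $|A|$ is absorbed into the constant is a slightly more explicit rendering of the paper's one-line argument, but not a different proof.
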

\begin{proof}
	Since both, the number of variables and the universe size is fixed. 
	The runtime of the form $O(|\psi|\cdot |A|^k)$ in Lemma~\ref{ec-variables} implies membership in $\Ptime$.
\end{proof}

\section{Conclusion}\label{sec:concl}
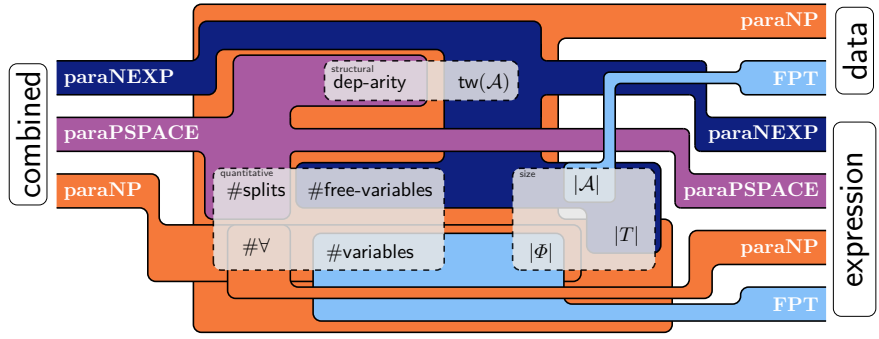
\begin{figure}
	\centering\scalebox{.75}{
	\begin{tikzpicture}
	
		\node[rotate=90,draw=black,rounded corners,text width=1.3cm,align=center] at (10.5,3.48) {\sffamily\LARGE data};
		\draw[rounded corners,,thick,black,fill=paraNP] (10,4.3) -- (-1.1,4.3) -- (-1.1,-1.5) -- (7.3,-1.5) -- (7.3,.5) -- (5.3,.5) -- (5.3,3.7) -- (10,3.7);

		\node[anchor=east] at (10,4) {\color{white}$\para\NP$};
		\node[anchor=east] at (10,3) {\color{white}$\FPT$};

		\node[rotate=90,draw=black,rounded corners] at (-4,2.) {\sffamily\LARGE combined};
		\draw[rounded corners,,thick,black,fill=paraNEXP] (-3.5,3.3) -- (-1,3.3) -- (-1,4) -- (5,4) -- (5,1.5) -- (7.1,1.5) -- (7.1,-.1) -- (5.8,-.1) -- (5.8,.7) -- (.7,.7) -- (.7,1.5) -- (3.3,1.5) -- (3.3,3.5) -- (-.7,3.5) -- (-.7,2.7) -- (-3.5,2.7);
		\node[anchor=west] at (-3.5,3) {\color{white}$\para\NEXP$};
		\draw[rounded corners,,thick,black,fill=paraPSPACE] (-3.5,2.3) -- (-.4,2.3) -- (-.4,3.4) -- (3,3.4) -- (3,2.5) -- (.6,2.5) -- (.6,0.5) -- (-0.9,.5) -- (-.9,1.7) -- (-3.5,1.7);
		\node[anchor=west] at (-3.5,2) {\color{white}$\para\PSPACE$};
		\draw[rounded corners,,thick,black,fill=paraNP] (-3.5,1.3) -- (-1.7,1.3) -- (-1.7,.4) -- (5.7,.4) -- (5.7,-.6) -- (-1.9,-.6) -- (-1.9,.7) -- (-3.5,.7); 
		\node[anchor=west] at (-3.5,1) {\color{white}$\para\NP$};		
	
		\node[rotate=90,draw=black,rounded corners,text width=3.2cm,align=center] at (10.5,.5) {\sffamily\LARGE expression};

		\path[fill=paraNEXP,apply={draw=black,thick} except on segments {8,9,10}] (10,2.3) {[rounded corners=1mm] -- (7.9,2.3) -- (7.9,3.3) -- (5,3.3)}-- (5,3.6) -- (4.7,3.6) -- (4.7,2.3) -- (5,2.3){[rounded corners=1mm] -- (5,2.7) -- (7.7,2.7) -- (7.7,1.7) -- (10,1.7)};

		\path[fill=paraPSPACE,apply={draw=black,thick} except on segments {8,9,10}] (10,1.3) {[rounded corners=1mm] -- (7.6,1.3) -- (7.6,2.1)-- (0.6,2.1)} -- (0.6,2.3) -- (0.3,2.3) -- (0.3,1.5) -- (.6,1.5){[rounded corners=1mm]  -- (.6,1.7) -- (7.35,1.7) -- (7.35,.7) -- (10,.7)};

		\draw[rounded corners,,thick,black,fill=FPT] (10,-.7) -- (8.5,-.7) -- (8.5,-1) -- (5.4,-1) -- (5.4,0.25) -- (1,.25) -- (1,-1.3) -- (8.8,-1.3) -- (10,-1.3);
		\draw[rounded corners,,thick,black,fill=paraNP] (10,.3) -- (7.6,.3) -- (7.6,-.7) -- (0.6,-.7) -- (.6,.4) -- (-.5,.4) -- (-.5,-.9) -- (8,-.9) -- (8,-.3) -- (10,-.3);
		\node[anchor=east] at (10,2) {\color{white}$\para\NEXP$};		
		\node[anchor=east] at (10,1) {\color{white}$\para\PSPACE$};
		\node[anchor=east] at (10,0) {\color{white}$\para\NP$};
		\node[anchor=east] at (10,-1) {\color{white}$\FPT$};

		\draw[rounded corners=1mm,thick,black,fill=FPT] (10,3.3) -- (8.5,3.3) -- (8.5,3.1) -- (6.1,3.1) -- (6.1,1.5) -- (5.4,1.5) -- (5.4,0.8) -- (6.3,.8) -- (6.3,2.9) -- (8.5,2.9) -- (8.5,2.7) -- (10,2.7);
		\node[anchor=east] at (10,3) {\color{white}$\FPT$};

		\begin{scope}[yshift=-1mm]
			\draw[opacity=.85,rounded corners,dashed,thick,black,fill=gray!20!white] (1.2,2.7) rectangle (4.6,3.4);
			\node[anchor=west] at (1.2,3.27) [opacity=.8] {\tiny\sffamily structural};
			\node at (4,3) {$\twstruc$};		
			\node at (2,3) {$\arity$};	
		\end{scope}
		\draw[opacity=.85,rounded corners,dashed,thick,black,fill=gray!20!white] (-.75,-.4) rectangle (3.3,1.4);
		\node[anchor=west] at (-.75,1.3) [opacity=.8] {\tiny\sffamily quantitative};
		\node at (2,1) {$\freevar$};
		\node at (0,1) {$\splits$};
		\node at (0,0) {$\quantifier$};
		\node at (2,-.1) {$\variable$};
		
		\draw[opacity=.85,rounded corners,dashed,thick,black,fill=gray!20!white] (4.5,-.4) -- (7,-.4) -- (7,1.4) -- (4.5,1.4) -- cycle;
		\node[anchor=west] at (4.5,1.3) [opacity=.8] {\tiny\sffamily size};
		\node at (6.5,0.2) {$\teamsize$};
		\node at (5.8,1.1) {$\strucsize$};
		\node at (5,-.1) {$\formula$};

	\end{tikzpicture}}
	\caption{Complexity classification overview for model checking problem of dependence logic, that takes grouping of parameters (quantitative, size, structural) and complexity classes into account.}\label{fig:cc-pic}
\end{figure}

In this paper, we started the parameterized complexity classification of model checking for dependence logic $\D$ with respect to nine different parameters (see Table~\ref{tbl:overview} for an overview of the results). 
In Fig.~\ref{fig:cc-pic} we depict a different kind of presentation of our results that also takes the grouping of parameters into quantitative, size related, and structural into account. 
The data complexity of $\D$ shows a dichotomy ($\FPT$ vs./ $\para\NP$-complete), where surprisingly there is only one case ($\strucsize$) where one can reach $\FPT$. 
This is even more surprising in the light of the fact that the expression ($\expression$ and the combined ($\combined$) complexities under the same parameter are still highly intractable.
Furthermore, there are parameters when $\combined$ and $\expression$ vary in the complexity ($\variable$).
The combined complexity of $\D$ stays intractable under any of the investigated parameterizations.
It might be interesting to study combination of parameters and see their joint effect on the complexity (yet, Corollaries~\ref{ec-universal-freevar},~\ref{ec-arity-freevar},~\ref{cor:forall+arity} tackle already some cases). 

We want to close this presentation with some further questions:
\begin{itemize}
	\item What other parameters could be meaningful (e.g., number of conjunction, number of existential quantifiers, treewidth of the formula)?
	\item What is the exact complexity of $\quantifier$-$\combined$, $\splits$-$\expression$/-$\combined$, $\arity$-$\combined$?
	\item The parameterized complexity analysis for other team-based logics, such as independence logic and inclusion logic.
\end{itemize}

\bibliography{main}
\end{document}